\documentclass[10pt]{article}
\usepackage[margin=1in]{geometry}  
\geometry{letterpaper}  
\pdfoutput=1
\usepackage{graphicx}
\usepackage{amsmath}
\usepackage{amsthm}
\usepackage{algorithm}
\usepackage{pdflscape}
\usepackage{cite}
\usepackage{color}
\usepackage{natbib}
\usepackage{bm}
\usepackage{amssymb}
\usepackage{epstopdf}

\def \Tr{\mbox{Tr}}

\newcommand{\ba}{{\bm a}}

\newcommand{\by}{{\bm y}}
\newcommand{\bY}{{\bm Y}}
\newcommand{\bN}{{\bm N}}
\newcommand{\bZ}{{\bm Z}}
\newcommand{\bP}{{\bm P}}
\newcommand{\bz}{{\bm z}}
\newcommand{\bbeta}{{\bm \beta}}
\newcommand{\bone}{{\bm 1}}
\newcommand{\bzero}{{\bm 0}}
\newcommand{\bI}{{\bm I}}
\newcommand{\bA}{{\bm A}}

\theoremstyle{plain}

\newtheorem{theorem}{Theorem}[section]
\newtheorem{lemma}[theorem]{Lemma}
\newtheorem{corollary}[theorem]{Corollary}

\title{SCALPEL: Extracting Neurons from Calcium Imaging Data}

\pdfminorversion=4
\begin{document}
\author{Ashley Petersen\footnote{ajpete@uw.edu}, Noah Simon\footnote{nrsimon@uw.edu}, and Daniela Witten\footnote{dwitten@uw.edu} \\ Department of Biostatistics, University of Washington, Seattle WA 98195}
\maketitle


\begin{abstract}
In the past few years, new technologies in the field of neuroscience have made it possible to simultaneously image activity in large populations of neurons at cellular resolution in behaving animals. In mid-2016, a huge repository of this so-called ``calcium imaging" data was made publicly-available. The availability of this large-scale data resource opens the door to a host of scientific questions, for which new statistical methods must be developed.
  
  In this paper, we consider the first step in the analysis of  calcium imaging data: namely, identifying the neurons in a calcium imaging video. We propose a dictionary learning approach for this task. First, we  perform image segmentation to develop a dictionary containing a huge number of candidate neurons. Next, we refine the dictionary using clustering. Finally, we apply the dictionary in order to select neurons and estimate their corresponding activity over time, using a sparse group lasso optimization problem. We apply our proposal to three calcium imaging data sets. 
 
 Our proposed approach is implemented in the \verb=R= package \verb=scalpel=, which is available on \verb=CRAN=.
\end{abstract}

Keywords: \textit{calcium imaging, cell sorting, dictionary learning, neuron identification, segmentation, clustering, sparse group lasso}


\section{Introduction}
\label{sec:neuronintro}

The field of neuroscience is undergoing a rapid transformation: new technologies are making it possible to image activity in large populations of neurons at cellular resolution in behaving animals \citep{ahrens2013whole, prevedel2014simultaneous, huber2012multiple, dombeck2007imaging}. The resulting \emph{calcium imaging} data sets   promise to provide unprecedented insight into  neural activity. However, they bring with them both statistical and computational challenges.

While calcium imaging data sets have been collected by individual labs for the past several years, up until quite recently large-scale calcium imaging data sets were not publicly-available. Thus, attempts by statisticians to develop  methods for the analysis of these data have been hampered by limited data access.  
 However, in July 2016, the Allen Institute for Brain Science released the Allen Brain Observatory, which contains 30 terabytes of raw data cataloguing 25 mice over 360 different experimental sessions \citep{shen2016brain}.
 This massive data repository is ripe for the development of statistical methods, which can be applied not only to the data from the Allen Institute, but also to calcium imaging data sets collected by individual labs world-wide. 

We now briefly describe the science underlying calcium imaging data. When a neuron fires, voltage-gated calcium channels in the axon terminal open, and calcium floods the cell. Therefore, intracellular calcium concentration  is a surrogate marker for the spiking activity of neurons \citep{grienberger2012imaging}. In recent years, genetically encoded calcium indicators have been developed \citep{rochefort2008calcium,looger2012genetically,chen2013ultrasensitive}. These indicators  bind to intracellular calcium molecules and fluoresce. Thus, the location and timing of neurons firing can be seen through a sequence of two-dimensional images taken over time, typically using two-photon microscopy \citep{svoboda2006principles, helmchen2005deep}.

 A typical calcium imaging video consists of a $500\times 500$ pixels frame over 1 hour, sampled at 15-30 Hz. A given pixel in a given frame is continuous-valued, with larger values representing higher fluorescent intensities due to greater calcium concentrations. An example frame from a calcium imaging video is shown in Figure~\ref{fig:introfig}(a). We have posted snippets of the three calcium imaging videos we analyze at \texttt{www.ajpete.com/software}.

\begin{figure}
\begin{center}
\includegraphics[width=\textwidth]{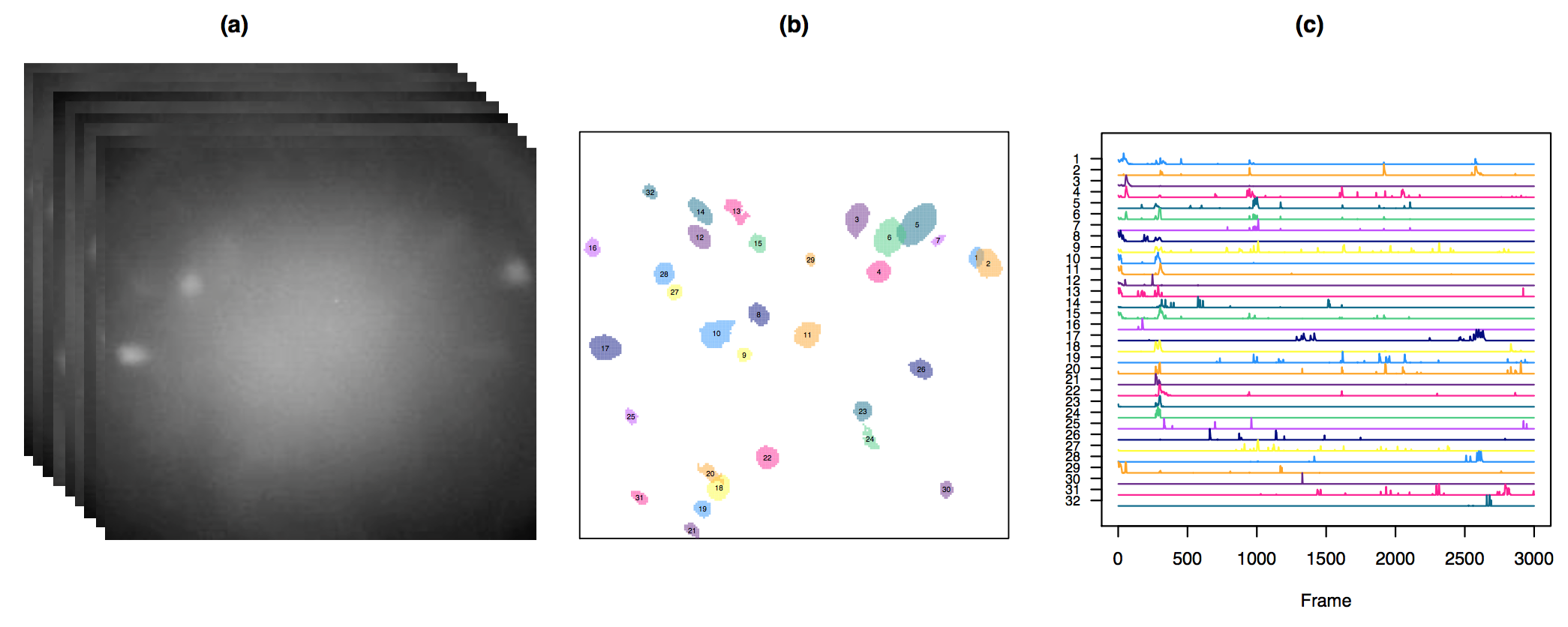}
\end{center}
\caption{In (a), we display sample frames from the raw calcium imaging video described in the text in Section~\ref{sec:neuronmethod}, and analyzed in greater detail in Section~\ref{subsec:onephoton}. We wish to construct a spatial map of the neurons, like that shown in (b), and to estimate the calcium trace for each neuron over time, as shown in (c).}
\label{fig:introfig}
\end{figure}

On the basis of a calcium imaging video, two goals are typically of interest:
\begin{itemize}
\item \emph{Neuron identification:} The goal is to assign pixels of the image frame to neurons. Due to the thickness of the brain slice captured by the imaging technology, neurons can overlap in the two-dimensional image. This means that a single pixel can be assigned to more than one neuron. This step is sometimes referred to as \emph{region of interest identification} or \emph{cell sorting}. An example of neurons identified from a calcium imaging video is shown in Figure~\ref{fig:introfig}(b). 
\item \emph{Calcium quantification:} The goal is to estimate the intracellular calcium concentration for each neuron during each frame of the movie. An example of these estimated \emph{calcium traces} is shown in Figure~\ref{fig:introfig}(c).
\end{itemize}
To a certain extent, these two goals can be accomplished by visual inspection. However, visual inspection suffers from several shortcomings:
\begin{itemize}
\item   It is subjective and it is not reproducible. Two people who view the same video may identify a different set of neurons or different firing times. 
\item It does not yield numerical information regarding neuron firing times, which may be needed for downstream analyses. 
\item It may be inaccurate: for instance, a neuron that is very dim or that fires infrequently may not be identified by visual inspection.
\item It is not feasible on videos with very large neuronal populations or very long durations. In fact, a typical calcium imaging video contains 250,000 pixels and more than 50,000 frames, making visual inspection essentially impossible. 
\end{itemize}

In this paper, we propose a method that  identifies the locations of neurons, and estimates their calcium concentrations over time. Previous proposals to automatically accomplish these tasks have been proposed in the literature, 
and are reviewed in Section~\ref{sec:neuronprevious}. However, our method has several advantages over competing approaches. Unlike many existing approaches, it:
\begin{itemize}
\item Involves few tuning parameters, which are themselves interpretable to the user, can for the most part be set to default values, and can be varied independently;
\item Yields results that are stable across a range of tuning parameters;
\item Is computationally feasible even on very large data sets; and 
\item Uses spatial and temporal information to resolve individual neurons, from sets of overlapping neurons, without post-processing.
\end{itemize}
The methods proposed in this paper can be seen as a necessary step that precedes downstream modeling of calcium imaging data.  For instance, there is substantial interest in modeling functional connectivity among populations of neurons, or using neural activity to decode stimuli (see, e.g., \citet{ko2011functional, mishchencko2011bayesian, paninski2007statistical}). However, before either of those tasks can be carried out, it is necessary to first identify the neurons and determine their calcium concentrations; those are the tasks that we consider in this paper. 

The remainder of this paper is organized as follows. We introduce notation in Section~\ref{sec:neurondata}. In Section~\ref{sec:neuronprevious}, we review related work. We present our proposal in Section~\ref{sec:neuronmethod}, and discuss the selection of tuning parameters  in Section~\ref{sec:neurontuning}. We apply our method to three calcium imaging videos in Section~\ref{sec:neurondataapp}. We discuss the technical details of our proposal  in Section~\ref{sec:neuroncomp}. We close with a discussion in Section~\ref{sec:neurondisc}. Proofs are in the appendix. 


\section{Notation}
\label{sec:neurondata}

Let $P$ denote the total number of pixels per image frame, and $T$ the number of frames of the video. 
We define $\bY$ to be a $P \times T$ matrix for which the $(i,j)$th element, $y_{i,j}$, contains the fluorescence of the $i$th pixel in the $j$th frame. 
 We let $\by_{i,\cdot} = \begin{pmatrix} y_{i,1} & y_{i,2} & \hdots & y_{i,T} \end{pmatrix}^\top$ represent the fluorescence of the $i$th pixel at each of the $T$ frames. We let $\by_{\cdot,j} = \begin{pmatrix} y_{1,j} & y_{2,j} & \hdots & y_{P,j} \end{pmatrix}^\top$ represent the fluorescence of all $P$ pixels during the $j$th frame. We use the same subscript conventions in order  to reference the elements, rows, and columns of other matrices.

The goal of our work is to (1) identify the locations of the neurons; and (2) quantify calcium concentrations for these neurons over time. We view these tasks in the framework of a matrix factorization problem: we decompose $\bY$ into a matrix of spatial components, $\bA \in \mathbb{R}^{P\times K}$, and a matrix of temporal components, $\bZ\in\mathbb{R}^{K\times T}$, such that 
\begin{equation}
\bY\approx\bA\bZ,
\label{eq:decomp}
\end{equation} where $K$ is the total number of estimated neurons.  Note that $\ba_{\cdot, k}$ specifies which of the $P$ pixels of the image frame are mapped to the $k$th neuron, and $\bz_{k, \cdot}$ quantifies the calcium concentration for the $k$th neuron at each of the $T$ video frames. We note that the true number of neurons is unknown, and must be determined as part of the analysis.


\section{Related Work}
\label{sec:neuronprevious}

There are two distinct lines of work in this area. The first focuses on simply identifying the regions of interest in the video, and then subsequently estimating the calcium traces. The second aims to simultaneously identify neurons and quantify their calcium concentrations. 

Methods that focus solely on region of interest identification typically construct a summary image for the calcium imaging video and then segment this image using various approaches. For example, \citet{pachitariu2013extracting} calculate the mean image of the video and then apply convolutional sparse block coding to identify regions of interest. Alternatively, \citet{smith2010parallel} calculate a local cross-correlation image, which is then thresholded using a locally adaptive filter to extract the regions of interest. Similar approaches have been used by others \citep{ozden2008identification, mellen2009semi}. These region of interest approaches often do not fully exploit temporal information, nor do they handle overlapping neurons well due to using summaries aggregated over time.

We now focus on methods proposed to accomplish both goals, neuron identification and calcium quantification, simultaneously. One of the first automatic methods in this area was proposed by \citet{mukamel2009automated}. This method first applies principal component analysis to reduce the dimensionality of the data, followed by spatio-temporal independent component analysis to produce spatial and temporal components that are statistically independent of one another. Though this method is widely used, it often requires heuristic post-processing of the spatial components, and typically fails to distinguish between spatially overlapping neurons \citep{pnevmatikakis2016simultaneous}.

To better handle overlapping neurons, \citet{maruyama2014detecting} proposed a non-negative matrix factorization approach, which estimates $\bA$ and $\bZ$ in \eqref{eq:decomp} by solving
\begin{equation}
\underset{\bA\geq\bzero,\bZ\geq \bzero,\ba_b\geq\bzero}{\text{minimize}}\quad \frac{1}{2}\| \bY - \bA\bZ-\ba_b\bz_b^\top \|_F^2.
\label{eq:neuronmaru}
\end{equation}
In \eqref{eq:neuronmaru}, the term $\ba_b\bz_b^\top$ is a rank-one correction for background noise: 
$\bz_b \in\mathbb{R}^{T}$ is a temporal representation of the background noise (known as the \emph{bleaching line}, and estimated using a linear fit to average fluorescence over time of a background region), and $\ba_b\in\mathbb{R}^{P}$ is a spatial representation of the background noise. Element-wise positivity constraints are imposed on $\bA$, $\bZ$, and $\ba_b$ in \eqref{eq:decomp}. While \eqref{eq:neuronmaru} can handle overlapping neurons, there are no constraints on the sparsity of $\bA$ or $\bZ$, or locality constraints on $\bA$. Thus, the estimated temporal components are very noisy, and the estimated spatial components are often not localized and must be post-processed heuristically. 

To overcome these shortcomings, \citet{haeffele2014structured} modify \eqref{eq:neuronmaru} so that the temporal components $\bA$ are sparse and the spatial components $\bZ$ are sparse and have low total variation. Recently, \citet{pnevmatikakis2016simultaneous} further refine \eqref{eq:neuronmaru} by explicitly modeling the dynamics of the calcium when estimating the temporal components $\bZ$, and  combining a sparsity constraint  with intermediate image filtering when estimating the spatial components $\bA$. \citet{zhou2016efficient} extend the work of \citet{pnevmatikakis2016simultaneous} to better handle one-photon imaging data by (1) modeling the background in a more flexible way and (2) introducing a greedy initialization procedure for the neurons that is more robust to background noise. Related approaches are taken by \citet{diego2013automated, diego2014sparse, friedrich2015fast}.
Other recent approaches consider using convolutional networks trained on manual annotation \citep{apthorpe2016automatic} and multi-level matrix factorization \citep{diego2013learning}.


While these existing approaches show substantial promise, and are a marked improvement over visual identification of neurons from the calcium imaging videos, they also suffer from some shortcomings: 
\begin{itemize}
\item The optimization problems (see, e.g., \eqref{eq:neuronmaru}) are biconvex. Thus, algorithms typically get trapped in unattractive local optima. Furthermore, the results strongly depend on the choice of initialization.
\item Each method involves several  user-selected tuning parameters. There is no natural interpretation to these tuning parameters, which leads to challenges in selection. Furthermore, changing one tuning parameter may necessitate updating all of them. Moreover, there is no natural nesting with respect to the tuning parameters: a slight increase or decrease in one tuning parameter can lead to a completely different set of identified neurons. 
\item The number of neurons $K$ must be specified in advance, and the estimates obtained for different values of $K$ will not be nested: two different values of $K$ can yield completely different answers.
\item Post-processing of the identified neurons is often necessary.
\item Implementation on very large data sets can be computationally burdensome.
\end{itemize}

To overcome these challenges, instead of simultaneously estimating  $\bA$ and $\bZ$ in the model \eqref{eq:decomp}, we take a dictionary learning approach. We first leverage spatial information in order to build a preliminary dictionary of spatial components, which is then refined using a clustering approach to give an estimate of $\bA$. We then use our estimate of $\bA$ in order to obtain an accurate estimate of the temporal components $\bZ$, while simultaneously selecting the final set of neurons in $\bA$. This dictionary learning approach allows us to re-cast \eqref{eq:decomp}, a very challenging unsupervised learning problem, into a much easier supervised learning problem. Compared to existing approaches, our proposal  is much faster to solve computationally, involves more interpretable tuning parameters, and  yields substantially more accurate results. 


\section{Proposed Approach}
\label{sec:neuronmethod}

Our proposed approach is based on dictionary learning. In Figure~\ref{fig:flowchart}, we summarize our  Segmentation, Clustering, and Lasso Penalties (SCALPEL) proposal, which consists of four steps:

\begin{figure}
\begin{center}
\includegraphics[width=\textwidth]{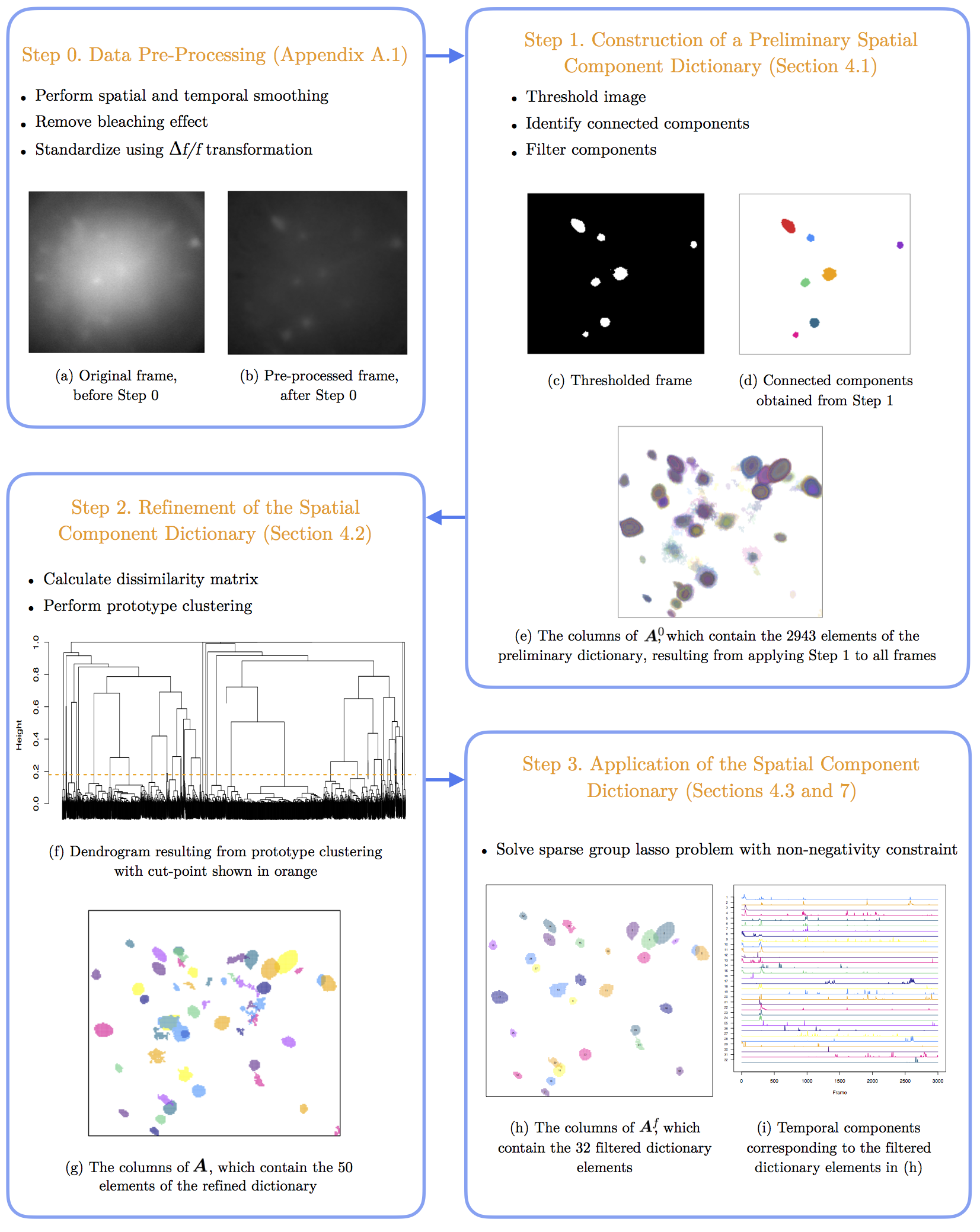}
\end{center}
\caption{A summary of the SCALPEL procedure, along with the results of applying each step to an example data set with $205 \times 226$ pixels and $3000$ frames, described in the text in Section~\ref{sec:neuronmethod}, and analyzed in greater detail in Section~\ref{subsec:onephoton}.}
\label{fig:flowchart}
\end{figure}

\begin{enumerate}
\item[Step 0.] \textit{Data Pre-Processing:} We apply standard pre-processing techniques in order to smooth the data both temporally and spatially, remove the bleaching effect, and calculate a standardized fluorescence. Details are provided in Appendix~\ref{app:neuronpreprocess}. In what follows, $\bY$ refers to the calcium imaging data after these three pre-processing steps have been performed.
\item[Step 1.] \textit{Construction of a Preliminary Spatial Component Dictionary:} We apply a simple image segmentation procedure to each frame of the video in order to derive a spatial component dictionary, which is used to construct the matrix $\bA^0\in\mathbb{R}^{P\times K^0}$ with $a^0_{j,k}=1$  if the $j$th pixel is contained in the $k$th preliminary dictionary element, and $a^0_{j,k}=0$ otherwise. This is discussed further in Section~\ref{subsec:neurondict}.
\item[Step 2.] \textit{Refinement of the Spatial Component Dictionary:} To eliminate redundancy in the preliminary spatial component dictionary, we cluster together dictionary elements that co-localize in time and space. This results in a matrix $\bA \in \mathbb{R}^{P \times K}$: $a_{j,k}=1$  if the $j$th pixel is contained in the $k$th dictionary element, and $a_{j,k}=0$ otherwise. More details are provided in Section~\ref{subsec:neuroncluster}.
\item[Step 3.] \textit{Application of the Spatial Component Dictionary:} We filter out dictionary elements corresponding to clusters with few members, resulting in $\bA^f\in\mathbb{R}^{P\times{K_f}}$, which contains a subset of the columns of $\bA$. We then estimate the temporal components $\bZ$ corresponding to the filtered elements of the dictionary  by solving a sparse group lasso problem with a non-negativity constraint. The $k$th row of $\hat\bZ$ is the estimated calcium trace corresponding to the $k$th filtered dictionary element. For some values of the tuning parameters, entire rows of $\hat\bZ$ may  equal zero. Thus, we are simultaneously selecting the spatial components and estimating the temporal components. Additional details are in Sections~\ref{subsec:neuronsgl} and \ref{sec:neuroncomp}.
\end{enumerate}
Step 1 is applied to each frame separately, and thus can be efficiently performed in parallel across the frames of the video. Similarly, parts of Step 0 can be parallelized across frames and across pixels.

Throughout this section, we illustrate SCALPEL on an example calcium imaging data set that has $205\times226$ pixels and 3000 frames. Figures~\ref{fig:introfig}-\ref{fig:zhou} and Figures~\ref{fig:solnseq}-\ref{fig:maxlam}, as well as Figures~\ref{fig:bleaching}, \ref{fig:preprocess}, \ref{fig:cluster}, and \ref{fig:tradeoff} in the appendix, involve this data set. In Section~\ref{sec:neurondataapp}, we present a more complete analysis of this data set, along with analyses of additional data sets.

\subsection{Step 1: Construction of a Preliminary Spatial Component Dictionary}
\label{subsec:neurondict}

In this step, we identify a large set of preliminary dictionary elements, by applying a simple image segmentation procedure to each frame separately. 

\begin{enumerate}
\item \textit{Threshold Image:} We create a binary image by thresholding the image frame. Figure~\ref{fig:threshold}(b) displays the binary image that results from thresholding the frame shown in Figure~\ref{fig:threshold}(a).
\item \textit{Identify Connected Components:} We identify the connected components of the thresholded image, using the notion of 4-connectivity: connected pixels are pairs of white pixels that are immediately to the left, right, above, or below one another \citep{sonka2014image}. Some of these connected components may represent neurons, whereas others are likely to be noise artifacts or snapshots of multiple nearby neurons. 
\item \textit{Filter Components:} To eliminate noise, we filter components based on their overall size, width, and height. In the examples in this paper, we discard connected components of  fewer than 25 or more than 500 pixels, as well as those with a width or height larger than 30 pixels. 
\end{enumerate}
We now discuss the choice of threshold used  above. After performing Step 0, we expect that the intensities of ``noise pixels" (i.e., pixels that are not part of a firing neuron in that frame) will have a distribution that is approximately symmetric and approximately centered at zero. In contrast, non-noise pixels will have larger values. This implies that the noise pixels should have a value no larger than the negative of the minimum value of $\bY$. Therefore, we threshold each frame using the negative of the minimum value of $\bY$. We also repeat this procedure using a threshold equal to the negative of the 0.1\% quantile of $\bY$, as well as with the average of these two threshold values. In Section~\ref{subsec:neurontuning1}, we discuss alternative approaches to choosing this threshold.

The $K^0$ connected components that arise from performing Step 1 on each frame, for each of the three threshold values, form a \emph{preliminary spatial component dictionary}. We use them to construct the matrix $\bA^0\in\mathbb{R}^{P\times K^0}$: the $k$th column of $\bA^0$ is a vector of $1$'s and $0$'s, indicating whether each pixel is contained in the $k$th preliminary dictionary element.

\begin{figure}
\begin{center}
\includegraphics[width=12cm]{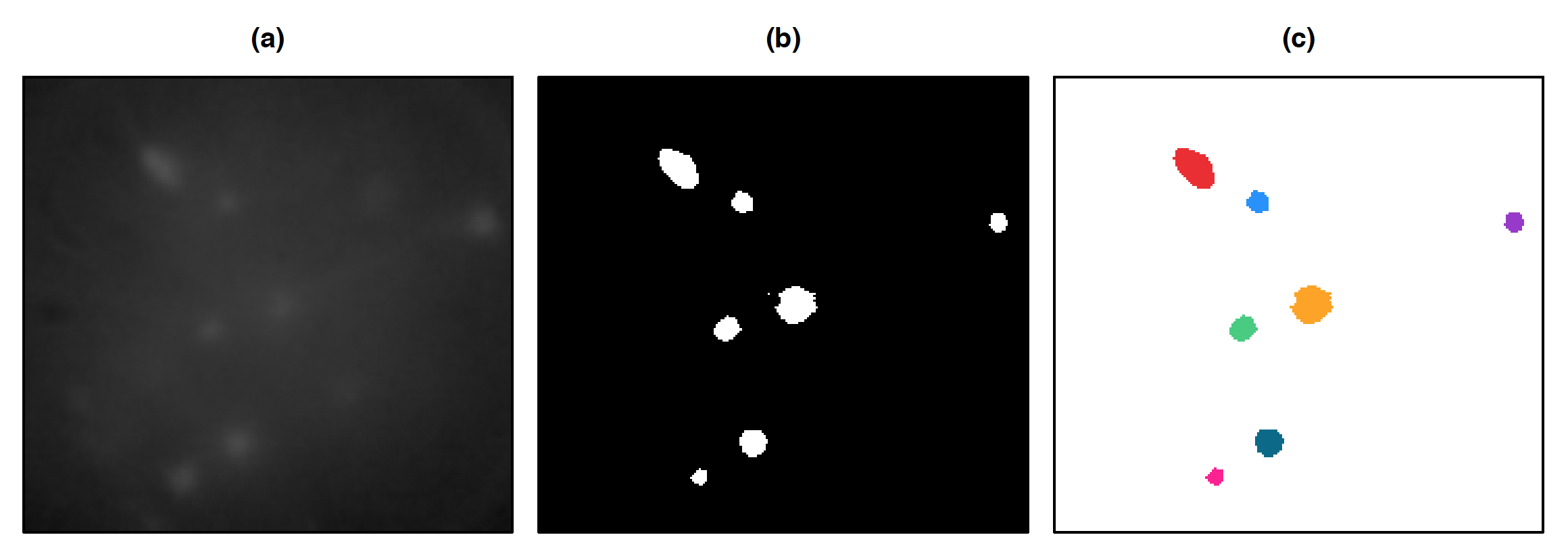}
\end{center}
\caption{In (a), we display a single frame of the example calcium imaging video after performing the pre-processing described in Appendix~\ref{app:neuronpreprocess}. In (b), we show the binary image that results after thresholding using the negative of the 0.1\% quantile of the video's elements. In (c), we display the seven connected components from the image in (b) that contain at least $25$ pixels.}
\label{fig:threshold}
\end{figure}

\subsection{Step 2: Refinement of the Spatial Component Dictionary}
\label{subsec:neuroncluster}
We will now refine the preliminary spatial component dictionary obtained in Step 1, by combining dictionary elements that are very similar to each other, as these likely represent multiple appearances of a single neuron.  We proceed as follows:
\begin{enumerate}
\item \textit{Calculate Dissimilarity Matrix:} We use a novel dissimilarity metric, which incorporates both spatial and temporal information, in order to calculate the dissimilarity between every pair of dictionary elements. More details are given in Section~\ref{subsubsec:neurondissim}.
\item \textit{Perform Prototype Clustering:} We use the aforementioned pair-wise dissimilarities in order to perform  \emph{prototype clustering} of dictionary elements  \citep{bien2011hierarchical}. We also identify a representative dictionary element for each cluster. More details are given in Section~\ref{subsubsec:neuronprotoclust}.
\end{enumerate}
These elements of this refined dictionary make up the columns of the matrix $\bA$, which will be used in Step 3, discussed in Section~\ref{subsec:neuronsgl}.

\subsubsection{Choice of Dissimilarity Metric}
\label{subsubsec:neurondissim}

Before performing clustering, we must decide how to quantify similarity between the $K^0$ elements of the preliminary dictionary obtained in Step 1. Dictionary elements that correspond to the same neuron are likely to have (1) similar spatial maps and (2) similar average fluorescence over time. To this end, we construct a dissimilarity metric that leverages both spatial and temporal information.

 We define $p_{i,j}=(\ba^0_{\cdot, i})^\top\ba^0_{\cdot, j}$, the number of pixels shared between the $i$th and $j$th dictionary elements. When $i=j$, $p_{i,i}$ is simply the number of pixels in the $i$th dictionary element. We then define the spatial dissimilarity between the $i$th and $j$th dictionary elements to be
\begin{equation}
d^s_{i,j}=1-\frac{p_{i, j}}{\sqrt{p_{i,i} p_{j,j}}}.
\label{eq:neuronspatdis}
\end{equation}
Thus, $d^s_{i,j}=1$ if and only if the $i$th and $j$th elements are non-overlapping in space, and $d^s_{i,j}=0$ if and only if they are identical. Note that $d^s_{i,j}$ is known as the cosine dissimilarity or Ochiai coefficient \citep{gower2006similarity}. Alternatives to \eqref{eq:neuronspatdis} are discussed  in Appendix~\ref{app:neuronaltdis}. 
 
We now define the matrix $\bY^B$, a thresholded version of the pre-processed data matrix $\bY$ (obtained in Step 0), with elements of the form
 $$\left[\bY^B\right]_{j,k}=\begin{cases}\left[\bY\right]_{j,k}&\text{if }\left[\bY\right]_{j,k}>-\text{quantile}_{0.1\%}(\bY)\\0&\text{otherwise}\end{cases}.$$
Note that when a value other than the negative of the 0.1\% quantile is used for image segmentation in Step 1, this value can also be used to threshold $\bY$ above. The  temporal dissimilarity between the $i$th and $j$th dictionary elements is defined as 
\begin{equation}
d^t_{i,j}=1-\frac{\left(\ba^0_{\cdot, i}\right)^\top\bY^B \left(\bY^B\right)^\top\ba^0_{\cdot, j}}{\left\|\left(\bY^B\right)^\top\ba^0_{\cdot, i}\right\|_2\left\|\left(\bY^B\right)^\top\ba^0_{\cdot, j}\right\|_2}.\nonumber
\end{equation}
 (Note that the elements of  $\left(\bY^B\right)^\top\ba^0_{\cdot, i}\in\mathbb{R}^T$ represent  the thresholded fluorescence of each time frame, summed over all pixels in the $i$th preliminary dictionary element.) 
We threshold $\bY$ before computing this dissimilarity, because (1) we are interested in the extent to which there is agreement between the peak fluorescences of the $i$th and $j$th preliminary dictionary elements; and  (2) the sparsity induced by thresholding is computationally advantageous.

Finally, the overall dissimilarity is 
\begin{equation}
d_{i,j}=\omega d_{i,j}^s+(1-\omega) d_{i,j}^t,
\label{eq:neuronoveralldis}
\end{equation}
where $\omega\in [0,1]$ controls the relative weightings of the spatial and temporal dissimilarities. We use $\omega=0.2$ to obtain the results shown throughout this paper.  In Figure~\ref{fig:dissimex}, we illustrate pairs of preliminary dictionary elements with various dissimilarities for $\omega=0.2$. 

\begin{figure}
\begin{center}
\includegraphics[width=11cm]{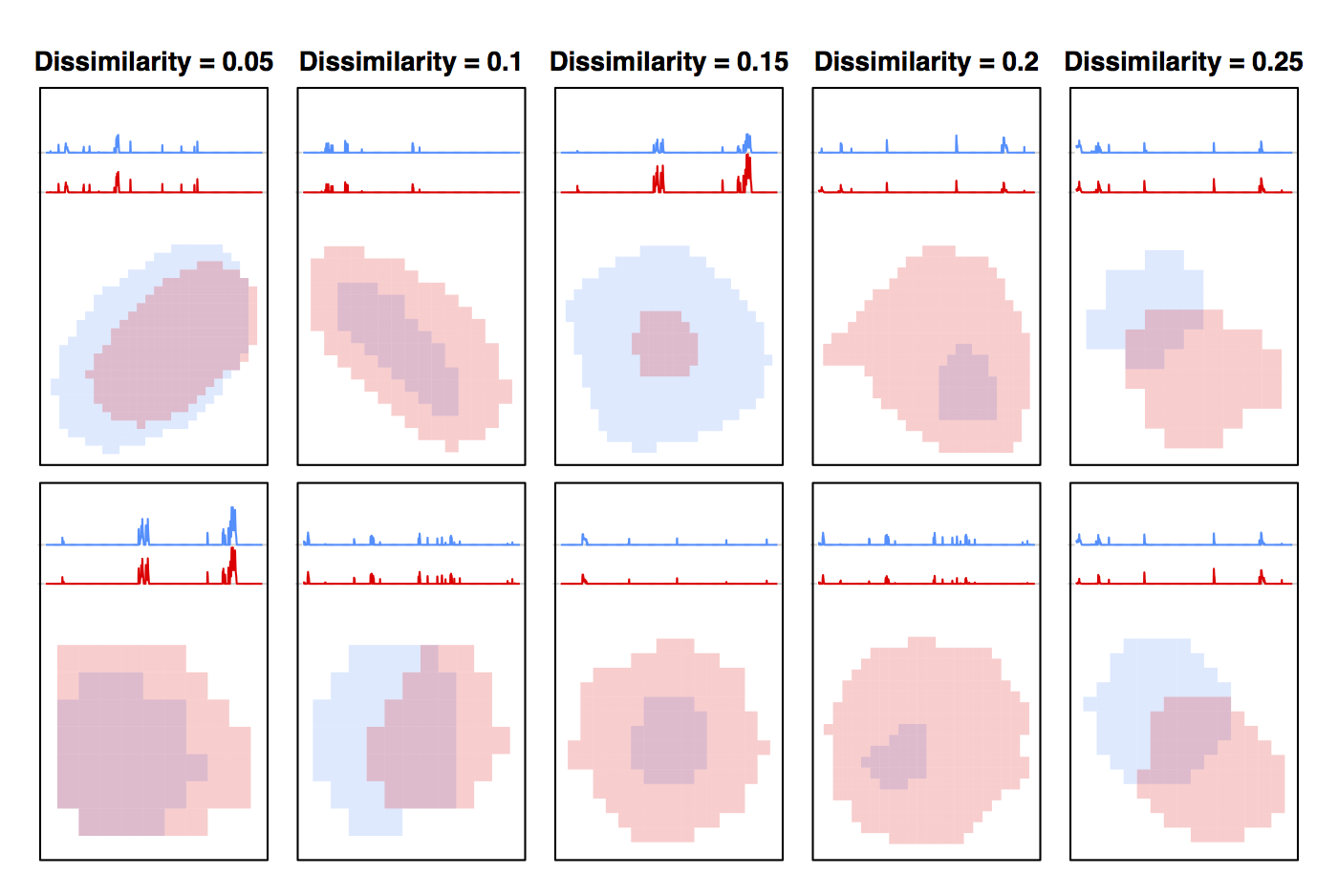}
\end{center}
\caption{Each column displays two pairs of preliminary dictionary elements with overall dissimilarities, as defined in \eqref{eq:neuronoveralldis}, of 0.05, 0.1, 0.15, 0.2, and 0.25. For each preliminary dictionary element,  the average thresholded fluorescence over time and the (zoomed-in) spatial map are shown. These results are based on the example calcium imaging video.}
\label{fig:dissimex}
\end{figure}

\subsubsection{Prototype Clustering}
\label{subsubsec:neuronprotoclust}

We now consider the task of clustering the elements of the preliminary dictionary. To avoid pre-specifying the number of clusters, and to obtain solutions that are nested as the number of clusters is varied, we opt to use hierarchical clustering \citep{trevor2009elements}. 

In particular, we use \emph{prototype clustering}, proposed in \citet{bien2011hierarchical}, with the  dissimilarity given in \eqref{eq:neuronoveralldis}. Prototype clustering guarantees that at least one member of each cluster has a small dissimilarity with all other members of the cluster. To represent each cluster using a single dictionary element, we choose the member with the smallest median dissimilarity to all of the other members. Then we combine the representatives of the $K$ clusters in order to obtain a refined spatial component dictionary. We can represent this refined dictionary with the matrix, $\bA \in \mathbb{R}^{P \times K}$, as follows: $a_{j,k}=1$  if the $j$th pixel is contained in the $k$th cluster's representative, and $a_{j,k}=0$ otherwise.

We apply prototype clustering to the example calcium imaging data set, using the \texttt{R} package \texttt{protoclust} \citep{protoclust}. The resulting dendrogram is  in Figure~\ref{fig:dendro}(a).  In Section~\ref{subsec:neurontuning2}, we discuss choosing the \emph{cut-point}, or  height, at which to cut  the dendrogram. Results for different cut-points are displayed in Figures~\ref{fig:dendro}(b)--(e). An additional example is provided in Appendix~\ref{app:neuroncluster}.

\begin{figure}
\begin{center}
\includegraphics[width=\textwidth]{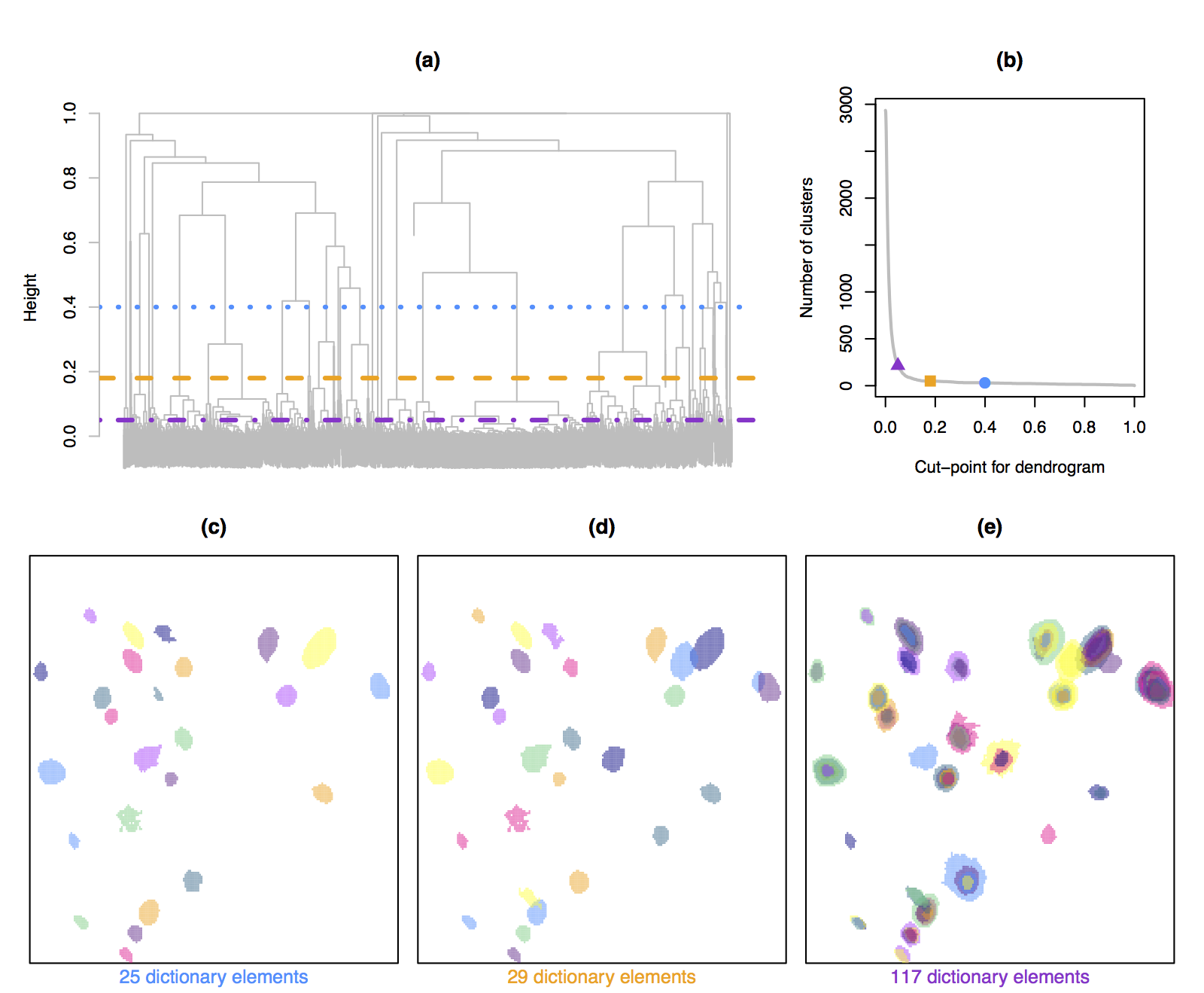}
\end{center}
\caption{In (a), we display the dendrogram that results from applying prototype clustering to the example calcium imaging data set. Three different cut-points are indicated: 0.05 (\protect\includegraphics[height=.18cm]{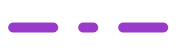}), 0.18 (\protect\includegraphics[height=.18cm]{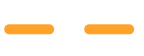}), and 0.4 (\protect\includegraphics[height=.18cm]{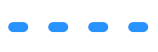}). In (b), we display the number of clusters that result from these three cut-points. In (c)--(e), we show the refined dictionary elements that result from using these three cut-points. For simplicity, we only display dictionary elements corresponding to clusters with at least 5 members.}
\label{fig:dendro}
\end{figure}

\subsection{Step 3: Application of the Spatial Component Dictionary}
\label{subsec:neuronsgl}

In this final step, we optionally filter the $K$ refined dictionary elements, and then estimate the  temporal components associated with this filtered dictionary. There are a couple of ways to perform the optional filtering of the dictionary elements: (1) we can filter based on the number of members in the cluster, as clusters with a larger number of members are more likely to be true neurons, or (2) we can manually filter the elements. This filtering process is discussed more in Appendix~\ref{app:minclusters}. After this filtering, we construct the filtered dictionary, $\bA^f\in\mathbb{R}^{P\times{K_f}}$, which contains the retained columns of $\bA$.  

We estimate the  temporal components associated with the $K_f$ elements  of the final dictionary by solving a sparse group lasso problem with a non-negativity constraint,
\begin{equation}
\label{eq:neuronsgl}
\displaystyle \underset{\bZ\in\mathbb{R}^{{K_f}\times T}, \bZ \geq 0}{\mathrm{minimize}} \quad  \frac{1}{2}  \left \| \bY - \tilde\bA^f\bZ\right\|_F ^2 + \lambda\alpha \sum_{k=1}^{{K_f}} \left\| \bz_{k,\cdot}\right\|_1+\lambda(1-\alpha) \sum_{k=1}^{{K_f}} \left\| \bz_{k,\cdot}\right\|_2,
\end{equation}
where $\alpha\in [0,1]$ and $\lambda>0$ are tuning parameters, and $\tilde\bA^f$ is defined as $\tilde\ba^f_{\cdot, k}=\ba^f_{\cdot, k}/\|\ba^f_{\cdot, k}\|_2^2$. This scaling of $\bA^f$ is justified in Section~\ref{subsec:neuronscaleA}.

The first term of the objective in \eqref{eq:neuronsgl} encourages the spatiotemporal factorization \eqref{eq:decomp} to fit the data closely. The second term in \eqref{eq:neuronsgl} encourages the temporal components to be sparse, so that each neuron is estimated to be active in a small number of frames \citep{tibshirani1996regression}. The third term in \eqref{eq:neuronsgl} encourages group-wise sparsity on the rows of $\bZ$ \citep{yuan2006model}, which leads to selection of dictionary components: for $\lambda (1-\alpha)$ sufficiently large, only a subset of the ${K_f}$ elements in the filtered dictionary will have a non-zero temporal component.

 In Section~\ref{sec:neuroncomp}, we discuss our algorithm for solving \eqref{eq:neuronsgl}, justify the scaling of $\bA^f$ used in \eqref{eq:neuronsgl}, and present conditions under which the solution to \eqref{eq:neuronsgl} is sparse. Readers simply interested in the practical use of our method may disregard this section.


\section{Tuning Parameter Selection}
\label{sec:neurontuning}

SCALPEL involves a number of tuning parameters:
\begin{itemize}
\item \textit{Step 1:} Quantile thresholds for image segmentation
\item \textit{Step 2:} Cut-point for dendrogram 
\item \textit{Step 3:} $\lambda$ and $\alpha$ for Equation~\ref{eq:neuronsgl} 
\end{itemize}
However, in marked contrast to competing methods, the tuning parameters in SCALPEL can be chosen independently at each step and are very interpretable to the user. Furthermore, we strongly recommend the use of default values for all but the choice of $\lambda$ in Step 3.  Therefore, in practice, SCALPEL only involves one tuning parameter that must be selected by the user. We discuss this at greater length below. 

\subsection{Tuning Parameters for Step 1}
\label{subsec:neurontuning1}

The tuning parameters in Step 1 are the quantile thresholds used in image segmentation. In analyzing the example video considered thus far in this paper, we used three different threshold values to segment the video: the negative of the minimum value of $\bY$, the negative of the 0.1\% quantile of $\bY$, and the average of these two values. In principle, these quantile thresholds may need to be adjusted; however, it is straightforward to select reasonable thresholds by visually examining a few frames and their corresponding binary images, as in Figures~\ref{fig:threshold}(a) and (b).

\subsection{Tuning Parameters for Step 2}
\label{subsec:neurontuning2}

In Step 2, we must choose a height $h \in [0,1]$ at which to cut the hierarchical clustering dendrogram, as shown in Figure~\ref{fig:dendro}(a). Fortunately, the cut-point has an intuitive interpretation, which can help guide our choice. If we cut the dendrogram at a height of $h$, then each cluster will contain an element of the preliminary dictionary that has dissimilarity no more than  $h$ with each of the other members of  that cluster. Figure~\ref{fig:dissimex} displays pairs of preliminary dictionary elements with a given dissimilarity. We have used a fixed cut-point of $0.18$ in order to obtain all of the results shown in this paper. An investigator can either choose a cut-point  by visual inspection of the resulting refined dictionary elements, or  can simply use a fixed cut-point, such as $0.18$. We recommend choosing a cut-point less than the dissimilarity weight $\omega$ in \eqref{eq:neuronoveralldis}, as this guarantees that the dictionary elements within each cluster will have spatial overlap. Note that we do not consider $\omega$ to be a tuning parameter, as we recommend keeping it fixed at  $\omega=0.2$.
 
\subsection{Tuning Parameters for Step 3}
\label{subsec:neurontuning3}

In Step 3, we also must choose values of $\lambda$ and $\alpha$ in \eqref{eq:neuronsgl}. We have had empirical success using a fixed value of $\alpha$ near 1, as we expect each neuron to spike a small number of times. We  used  $\alpha=0.9$ to obtain all of the results in this paper. 

Since \eqref{eq:neuronsgl} is a supervised learning problem, cross-validation (or a closely-related validation set approach) is a natural approach to select the value of $\lambda$.  It is computationally feasible, because solving \eqref{eq:neuronsgl} is actually quite efficient, as detailed in Section~\ref{subsec:neuronalg}.  Furthermore, Corollary~\ref{cor:neuronmaxlam} makes it easy to determine the largest value of $\lambda$ that should be considered in our cross-validation scheme. Contrary to standard practice, we use a thresholded version of $\bY$ when calculating the validation error, as we wish to achieve small reconstruction error on the brightest parts of the video, which align with times when neurons are active.
 We illustrate this approach in Section~\ref{sec:neurondataapp}, and provide further details in Appendix~\ref{app:neuroncrossval}. 

As an alternative to cross-validation, the distribution of $\bY$  provides an efficient way to choose $\lambda$. We can think of the elements  of $\bY$ as corresponding to noise pixels (i.e., non-neuronal pixels or neuronal pixels in non-firing frames) as well as  signal pixels. Due to the pre-processing performed in Step 0, the distribution of the noise pixels is approximately symmetric and approximately centered around 0. The distribution of the signal pixels has a much larger center. Therefore, it is likely that many signal pixels, and very few noise pixels, will exceed the negative of (for instance) the 0.1\% quantile of $\bY$. The closed-form solution for non-overlapping neurons given in \eqref{eq:neuronsinglezsoln} implies that  a frame of $\hat\bz$ is zeroed out if its average fluorescence is less than $\lambda\alpha$. Thus, we choose $\lambda=\frac{-\text{quantile}_{0.1\%}(\bY)}{\alpha}$.

Alternatively, we can choose the value of $\lambda$ by visually inspecting the calcium traces (that is, the rows of $\hat\bZ$),  as shown for instance in Figure~\ref{fig:introfig}(c), at each value of $\lambda$. 

Any of these approaches can be used to choose the value of  $\lambda$ in \eqref{eq:neuronsgl}. 
However, in light of the fact that \eqref{eq:neuronsgl} decomposes into groups of non-overlapping neurons as detailed in Section~\ref{subsec:neuronalg}, it is also possible to choose a  different value of $\lambda$ for each group of overlapping neurons. This corresponds to a straightforward modification of the optimization problem \eqref{eq:neuronsgl}.


\section{Results for Calcium Imaging Data}
\label{sec:neurondataapp}
In this section, we compare SCALPEL's performance to that of competitor methods on three calcium imaging data sets. In particular, we consider one-photon and two-photon calcium imaging data sets in Sections~\ref{subsec:onephoton} and \ref{subsec:twophoton}, respectively. In Section~\ref{subsec:neurontiming}, we compare the run times for the various methods. Additional plots regarding the analyses of these three calcium imaging videos are available at \texttt{www.ajpete.com/software}.

\subsection{Application to One-Photon Calcium Imaging Data}
\label{subsec:onephoton}
We now present the results from applying SCALPEL to the calcium imaging video used as an example in Section~\ref{sec:neuronmethod}. This one-photon video, collected by the lab of Ilana Witten at the Princeton Neuroscience Institute, has 3000 frames of size $205\times 226$ pixels sampled at 10 Hz. We used the default tuning parameters to analyze this video. Using the default quantile thresholds that corresponded to thresholds of 0.0544, 0.0743, and 0.0942, Step 1 of SCALPEL resulted in a preliminary dictionary with 2943 elements, which came from 997 different frames of the video. Using a cut-point of 0.18 in Step 2 resulted in a refined dictionary that contained 50 elements. In Step 3, we discarded the twenty-one components corresponding to clusters with fewer than 5 preliminary dictionary elements assigned to them, and then fit the sparse group lasso model with $\alpha=0.9$ and $\lambda=0.0416$, which was chosen using the validation set approach described in Appendix~\ref{app:neuroncrossval}. The results are shown in Figures~\ref{fig:epmAZ}(a) and (b). In Figure~\ref{fig:epmAZ}(c), we compare the estimated neurons to a pixel-wise variance plot of the calcium imaging video. We expect pixels that are part of true neurons to have higher variance than pixels not associated with any neurons. Indeed, we see that many of the estimated neurons coincide with regions of high pixel-wise variance. However, some estimated neurons are in regions with low variance. Examining the frames from which the dictionary elements were derived can provide further evidence as to whether an estimated neuron is truly a neuron. For example, in Figure~\ref{fig:epmframes}(a), we show that one of the estimated neurons in a low-variance region does indeed appear to be a true neuron, while Figure~\ref{fig:epmframes}(b) shows evidence that one of the estimated neurons is not truly a neuron.

Instead of filtering on the basis of cluster size, we can manually classify the elements of the refined dictionary from Step 2 before proceeding to Step 3. In order to decide whether to keep or discard an estimated neuron, we can examine the frames from which the corresponding preliminary dictionary elements were extracted. We performed this process for the 50 dictionary elements from Step 2, and found evidence that 32 of the 50 elements correspond to true neurons. Compared to requiring at least 5 members in a cluster, this manual process produced an additional 4 true neurons, while excluding an element with 11 members that did not appear to correspond to a true neuron (element 22 in Figure~\ref{fig:epmAZ}). The spatial and temporal results based on this manual filtering are shown in Figures~\ref{fig:introfig}(b) and (c).

\begin{figure}
\begin{center}
\includegraphics[width=\textwidth]{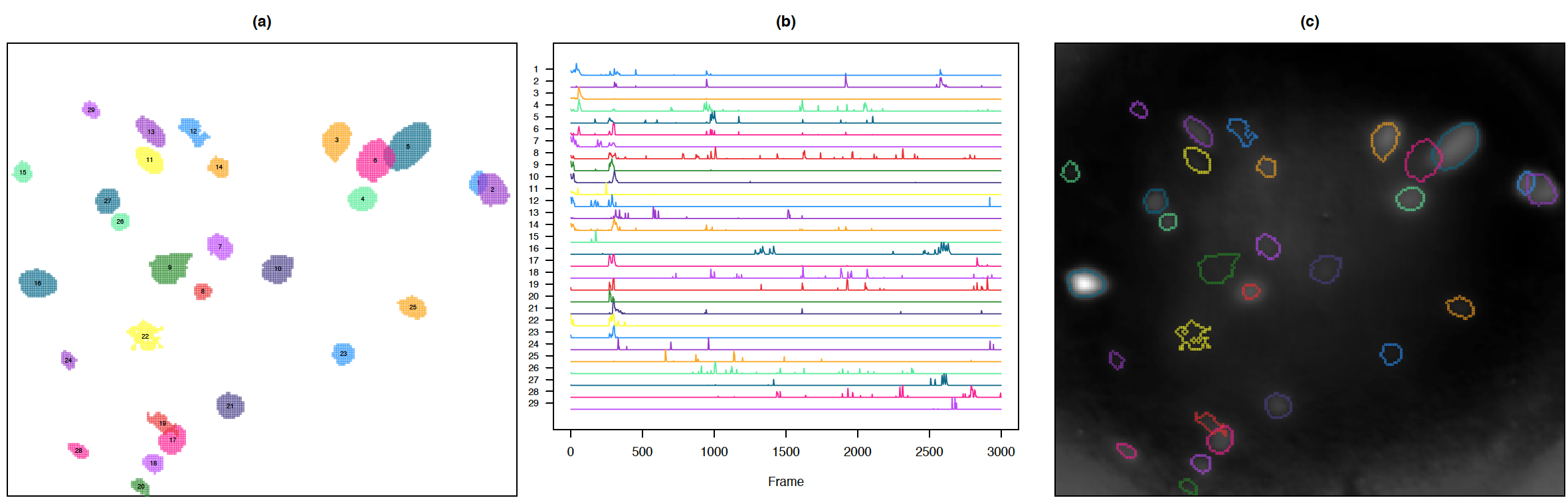}
\end{center}
\caption{In (a), we plot the spatial maps for the 29 elements of the final dictionary for the calcium imaging video considered in Section~\ref{subsec:onephoton}. In (b), we plot their estimated intracellular calcium concentrations corresponding to $\lambda$ chosen via a validation set approach. In (c), we compare the outlines of the 29 dictionary elements from (a) to a heat map of the pixel-wise variance of the calcium imaging video. That is, we plot the variance of each pixel over the 3000 frames, with whiter points indicating higher variance.}
\label{fig:epmAZ}
\end{figure}

\begin{figure}
\begin{center}
\includegraphics[width=11cm]{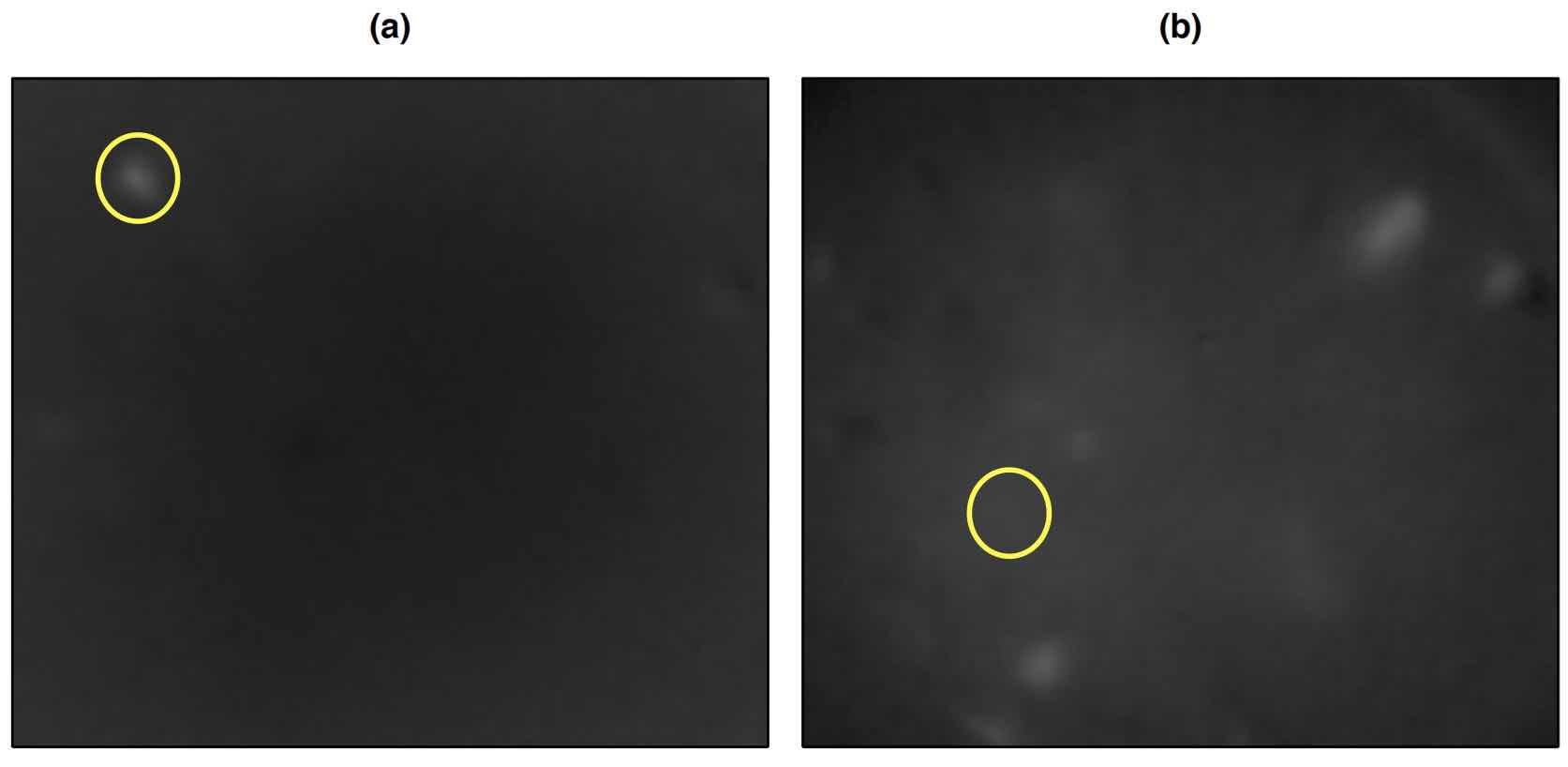}
\end{center}
\caption{In (a), we see that one of the estimated neurons in a low-variance region in Figure~\ref{fig:epmAZ}(c) does correspond to a true neuron. In (b), we see a frame in which one of the estimated neurons was identified, though there does not appear to be a true neuron.}
\label{fig:epmframes}
\end{figure}

We compare the performance of SCALPEL to that of CNMF-E \citep{zhou2016efficient}, which is a proposal for the analysis of one-photon data that takes a matrix factorization approach, as described in Section~\ref{sec:neuronprevious}. The tuning parameters we consider are those noted in Algorithm 1 of  \citet{zhou2016efficient}: the average neuron size $r$ and the width of the 2D Gaussian kernel $\sigma$, which relate to the spatial filtering, and the minimum local correlation $c_{\text{min}}$ and the minimum peak-to-noise ratio $\alpha_{\text{min}}$, which relate to initializing neurons. We choose $r=11$ in accordance with the average diameter of the neurons identified using SCALPEL. The default values suggested for the other tuning parameters are $\sigma=3$, $c_{\text{min}}=0.85$, and $\alpha_{\text{min}}=10$. We present the results for these default values in Figure~\ref{fig:zhou}(a). Only 14 of the 32 neurons identified using SCALPEL were found by CNMF-E using these default parameters. In order to increase the number of neurons found, we consider lower values for $c_{\text{min}}$ and $\alpha_{\text{min}}$. We fit CNMF-E for all combinations of $c_{\text{min}}=0.5,0.6,0.7$ and $\alpha_{\text{min}}=3,5,7$. To assess the performance of these 9 combinations of tuning parameters, we reviewed each estimated neuron for evidence of whether or not it appeared to be a true neuron, based on reviewing the frames in which the neuron was estimated to be most active. We also determined instances where multiple estimated neurons corresponded to the same true neuron.
In Figure~\ref{fig:zhou}(b), we present the fit, chosen from the 9 fits considered, that has the smallest number of false positive neurons (i.e., estimated neurons that are noise or duplicates of other estimated neurons). This fit estimated 24 neurons: 21 elements correspond to neurons identified using SCALPEL, one element (element 23 in Figure~\ref{fig:zhou}(b)) corresponds to a neuron not identified using SCALPEL, and two elements (elements 22 and 24 in Figure~\ref{fig:zhou}(b)) appear to be duplicates of other estimated neurons. 
In Figure~\ref{fig:zhou}(c), we present the fit, chosen from the 9 fits considered, with the highest number of true positive neurons (i.e., neurons that were identified by CNMF-E that appear to be real). This fit estimated 41 neurons: 25 elements correspond to neurons identified using SCALPEL, two elements (elements 27 and 34 in Figure~\ref{fig:zhou}(c)) correspond to neurons not identified using SCALPEL, 11 elements appear to be duplicates of other estimated neurons, and three elements (elements 39, 40, and 41 in Figure~\ref{fig:zhou}(c)) appear to be noise. So while this pair of tuning parameter values resulted in the identification of most of the neurons, it also resulted in a number of false positives. Some of the estimated neurons in Figure~\ref{fig:zhou}(c) are large and diffuse making them difficult to interpret.

\begin{figure}
\begin{center}
\includegraphics[width=\textwidth]{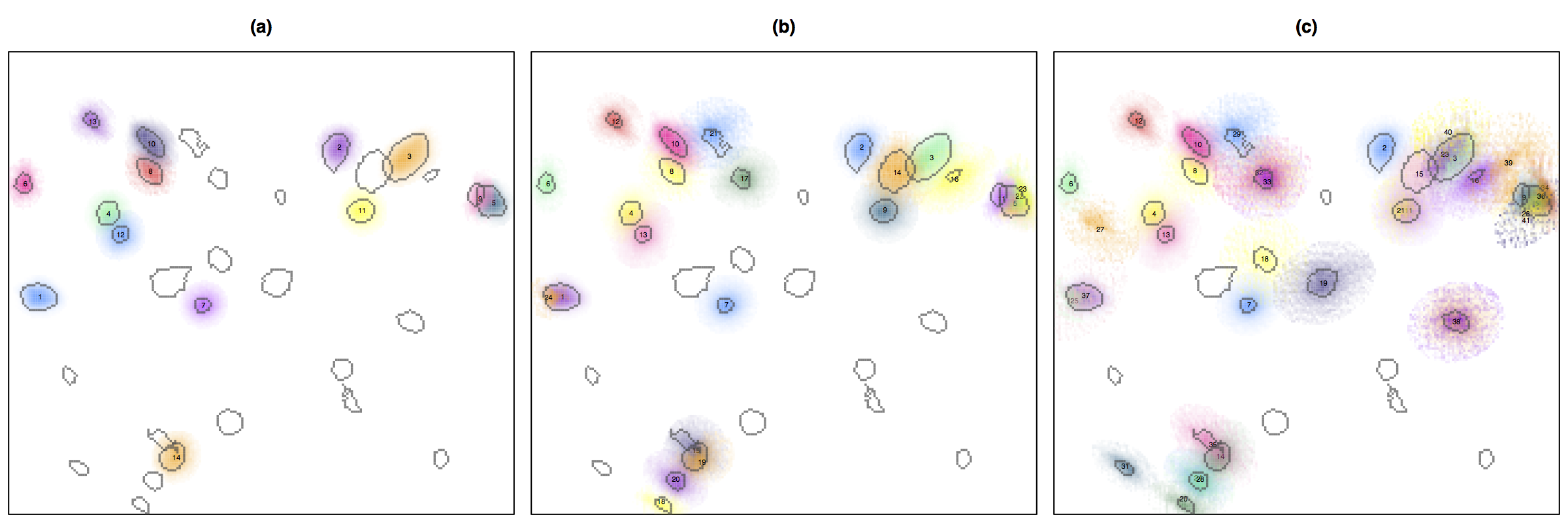}
\end{center}
\caption{We display the estimated neurons that result from applying CNMF-E \citep{zhou2016efficient} to the calcium imaging video considered in Section~\ref{subsec:onephoton} for (a) the default parameters $c_{\text{min}}=0.85$ and $\alpha_{\text{min}}=10$, (b) the parameters $c_{\text{min}}=0.6$ and $\alpha_{\text{min}}=7$, and (c) the parameters $c_{\text{min}}=0.5$ and $\alpha_{\text{min}}=3$. The variation in darkness of the neurons estimated by CNMF-E is due to the fact that they take on continuous values, compared to the binary masks produced by SCALPEL. In each plot, the 32 true neurons identified by SCALPEL are outlined in gray.}
\label{fig:zhou}
\end{figure}

\subsection{Application to Two-Photon Calcium Imaging Data}
\label{subsec:twophoton}

We now illustrate SCALPEL on two calcium imaging videos released by the Allen Institute as part of their Allen Brain Observatory. In addition to releasing the data, the Allen Institute also made available the spatial masks for the neurons they identified in each of the videos. Thus we compare the estimated neurons from SCALPEL to those from the Allen Institute analysis. The two-photon videos we consider are those from experiments 496934409 and 502634578. The videos contain 105,698 and 105,710 frames, respectively, of size $512\times 512$ pixels. In their analyses, the Allen Institute down-sampled the number of frames in each video by 8, which we also did for comparability. For these videos, we found using a threshold value slightly smaller than the negative of the 0.1\% quantile was effective. Otherwise, default values were used for all of the tuning parameters. 

\subsubsection{Allen Brain Observatory Experiment 496934409}
\label{subsubsec:allen1}
Using thresholds of 0.250, 0.423, and 0.596, Step 1 of SCALPEL resulted in a preliminary dictionary with 68,630 elements, which came from 11,739 different frames of the video. After refining the dictionary in Step 2, we were left with 544 elements. In the analysis by the Allen Institute, neurons near the boundary of the field of view were eliminated from consideration. Thus we filtered out 259 elements that contained pixels outside of the region considered by the Allen Institute. Of the remaining 285 elements, 32 of these were determined to be dendrites, 131 were small elements not of primary interest, and 10 were duplicates of other neurons found. Thus in the end, we identified the same 87 neurons that the Allen Institute did, in addition to 25 potential neurons not identified by the Allen Institute. In Figure~\ref{fig:AllenVideo1}(a), we show the neurons jointly identified by SCALPEL and the Allen Institute.  In Figure~\ref{fig:AllenVideo1}(b), we show the potential neurons uniquely identified by SCALPEL, along with evidence supporting them being neurons in Figure~\ref{fig:AllenVideo1}(c).  

\begin{figure}
\begin{center}
\includegraphics[width=\textwidth]{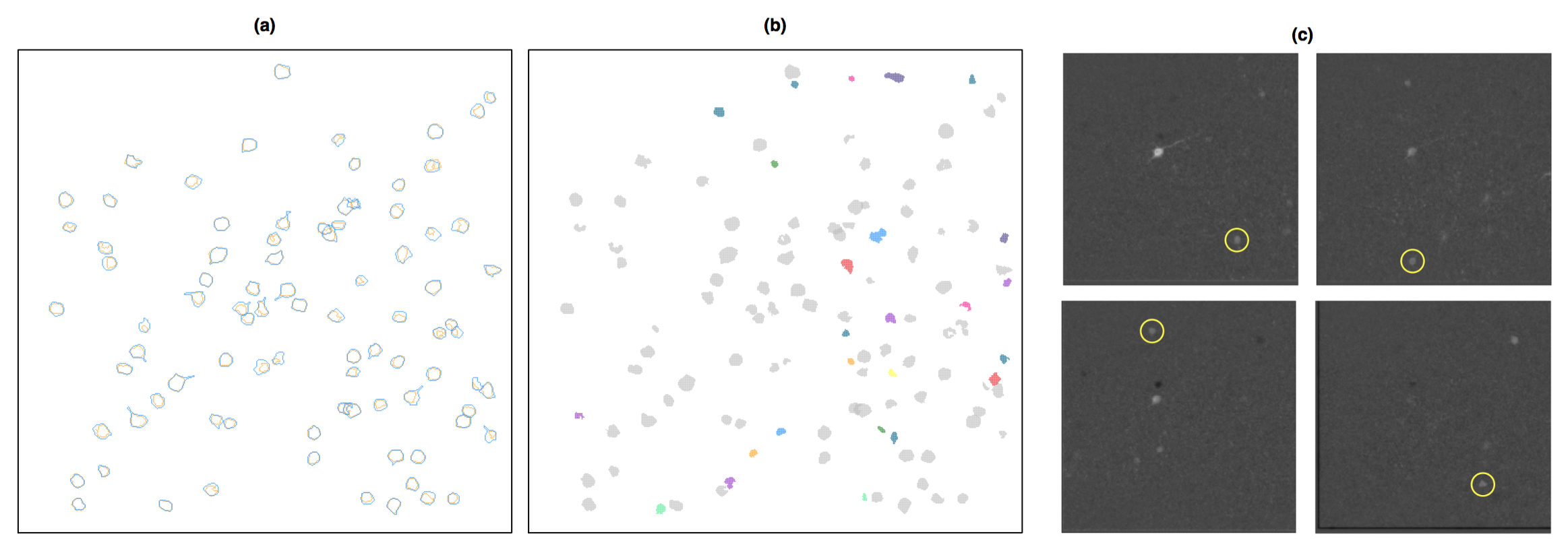}
\end{center}
\caption{We present the results for the calcium imaging video analyzed in Section~\ref{subsubsec:allen1}. In (a), we plot the outlines of the neurons identified by the Allen Institute in blue, along with the outlines of the corresponding SCALPEL neurons in orange. In (b), we plot the 25 potential neurons uniquely identified by SCALPEL in color, along with the SCALPEL neurons also identified by the Allen Institute in gray. In (c), we provide evidence for 4 of the 25 unique neurons. Similar plots for all of the potential neurons uniquely identified by SCALPEL are available at \texttt{www.ajpete.com/software}.}
\label{fig:AllenVideo1}
\end{figure}

\subsubsection{Allen Brain Observatory Experiment 502634578}
\label{subsubsec:allen2}
Using thresholds of 0.250, 0.481, and 0.712, Step 1 of SCALPEL resulted in a preliminary dictionary with 84,996 elements, which came from 12,272 different frames of the video. After refining the dictionary in Step 2, we were left with 1297 elements. Once again, we filtered out the 390 elements that contained pixels outside of the region considered by the Allen Institute. Of the remaining 907 elements, 22 of these were determined to be dendrites, 382 were small elements not of primary interest, and 39 were duplicates of other neurons found. Thus in the end, we identified 370 of the 375 neurons that the Allen Institute did, in addition to 94 potential neurons not identified by the Allen Institute. Note that the 5 neurons identified by the Allen Institute, but not SCALPEL, each appear to be combinations of two neurons. SCALPEL did identify the 10 individual neurons of which these 5 Allen Institute neurons were a combination.  In Figure~\ref{fig:AllenVideo2}(a), we show the neurons jointly identified by SCALPEL and the Allen Institute.  In Figure~\ref{fig:AllenVideo2}(b), we show the potential neurons uniquely identified by SCALPEL, along with evidence supporting them being neurons in Figure~\ref{fig:AllenVideo2}(c).  

\begin{figure}
\begin{center}
\includegraphics[width=\textwidth]{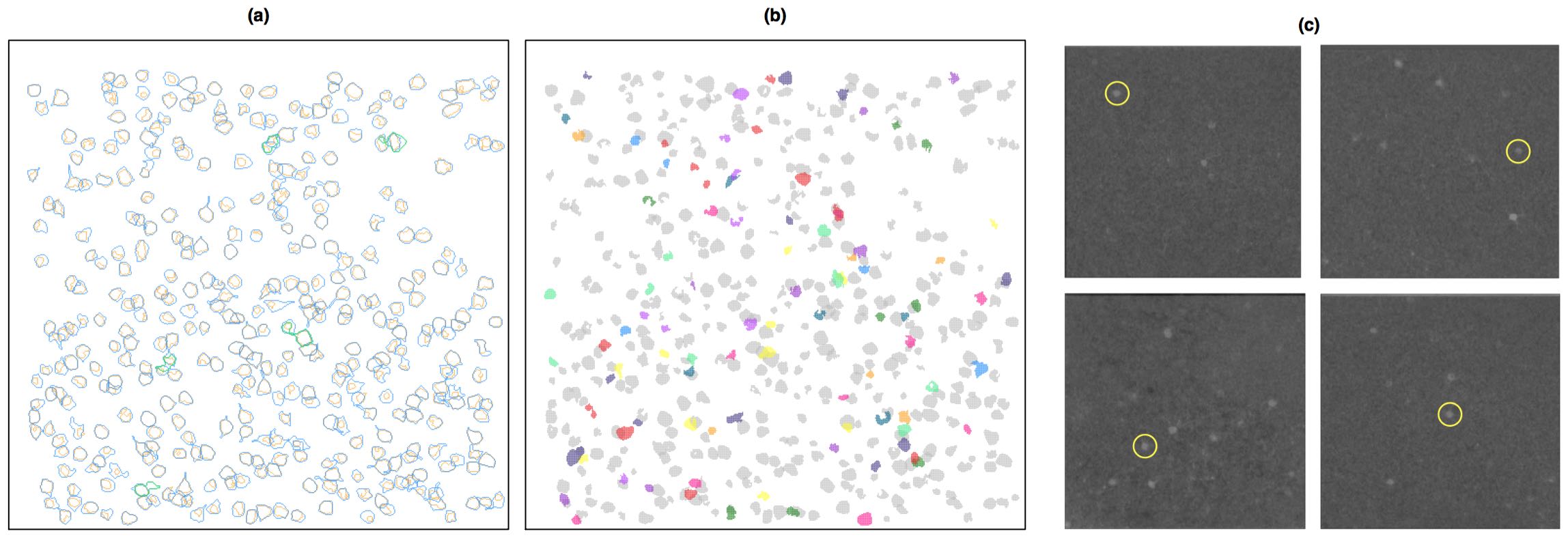}
\end{center}
\caption{We present the results for the calcium imaging video analyzed in Section~\ref{subsubsec:allen2}. In (a), we plot the outlines of the neurons identified by the Allen Institute in blue, along with the outlines of the corresponding SCALPEL neurons in orange. Those shown in green are the Allen Institute neurons that appear to actually be a combination of 2 neurons. In (b), we plot the 94 potential neurons uniquely identified by SCALPEL in color, along with the SCALPEL neurons also identified by the Allen Institute in gray. In (c), we provide evidence for 4 of the 94 unique neurons. Similar plots for all of the potential neurons uniquely identified by SCALPEL are available at \texttt{www.ajpete.com/software}.}
\label{fig:AllenVideo2}
\end{figure}

\subsection{Timing Results}
\label{subsec:neurontiming}

All analyses were run on a Macbook Pro with a 2.0 GHz Intel Sandy Bridge Core i7 processor. Running the SCALPEL pipeline on the one-photon data presented in Section~\ref{subsec:onephoton} took 6 minutes for Step 0 and 2 minutes for Steps 1-3. Running CNMF-E on the one-photon data presented in Section~\ref{subsec:onephoton} took 5, 5, and 7 minutes for the analyses presented in Figures~\ref{fig:zhou}(a), \ref{fig:zhou}(b), and \ref{fig:zhou}(c), respectively. Running the SCALPEL pipeline on the two-photon data presented in Section~\ref{subsubsec:allen1} took 12.85 hours for Step 0, 2.33 hours for Step 1, and 0.42 hours for Step 2. Running the SCALPEL pipeline on the two-photon data presented in Section~\ref{subsubsec:allen2} took 12.50 hours for Step 0, 2.55 hours for Step 1, and 0.43 hours for Step 2. 

Further computational gains could be made by parallelizing the implementation of SCALPEL Steps 0 and 1. In the instance of the analysis of the Allen Institute data, this could provide an almost 10-fold decrease in the computational time required. Also, recall that SCALPEL's most time-intensive step, Step 0, is only ever run a single time for each data set, regardless of whether the user wishes to fit SCALPEL for different tuning parameters.


\section{Further Discussion of Step 3}
\label{sec:neuroncomp}

In this section, we elaborate on issues related to solving the sparse group lasso problem \eqref{eq:neuronsgl} in Step 3. The discussion in this section is somewhat technical, and can be skipped by readers only interested in the practical use of SCALPEL. We discuss the solution to \eqref{eq:neuronsgl} when ${K_f}=1$ in Section~\ref{subsec:neuronsinglecomp}, our algorithm for solving \eqref{eq:neuronsgl} for any value of ${K_f}$ in Section~\ref{subsec:neuronalg}, the justification for the scaling of $\bA^f$ in Section~\ref{subsec:neuronscaleA}, a result about the tuning parameters $\alpha$ and $\lambda$ that lead to a sparse solution in Section~\ref{subsec:neuronsparse}, and the ability of the group lasso penalty in  \eqref{eq:neuronsgl} to zero out unwanted dictionary elements in Section~\ref{subsec:neurondouble}.

\subsection{Single Component Problem}
\label{subsec:neuronsinglecomp}

We first consider solving \eqref{eq:neuronsgl} in the setting in which there is a single spatial component (${K_f}=1$). While calcium imaging data will not have only a single neuron, this setting provides intuition, and will prove useful when we later solve \eqref{eq:neuronsgl} for ${K_f}>1$ in Section~\ref{subsec:neuronalg}.

\begin{lemma}
\label{lem:neurononecomp}
The solution to
\begin{equation}
\label{eq:neurononecomp}
\displaystyle \underset{\bz\in\mathbb{R}^{T},\bz\geq \bzero}{\mathrm{minimize}} \quad  \frac{1}{2}  \left \| \bY - \tilde\ba^f\bz^\top\right\|_F ^2 + \lambda\alpha\left\| \bz\right\|_1 +\lambda(1-\alpha)  \left\| \bz\right\|_2 
\end{equation}
is
\begin{equation}
\label{eq:neuronsinglezsoln}
\hat\bz=\left( 1-\frac{\lambda(1-\alpha)}{\left\|\left( \bY^\top\tilde\ba^f-\lambda\alpha\bone \right)_+\right\|_2}\right)_+ \left( \frac{\bY^\top\tilde\ba^f -\lambda\alpha\bone}{(\tilde\ba^f)^\top\tilde\ba^f}\right)_+,
\end{equation}
where the positive part operator is applied element-wise.
\end{lemma}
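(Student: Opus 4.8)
The plan is to reduce the matrix-valued problem \eqref{eq:neurononecomp} to a standard non-negatively constrained proximal operator in $\bz$, and then to confirm the claimed formula by checking the first-order (KKT) optimality conditions; strict convexity will guarantee that verifying a single candidate is enough. First I would expand the Frobenius norm. Writing $c = (\tilde\ba^f)^\top\tilde\ba^f > 0$ and ${\bm b} = \bY^\top\tilde\ba^f$, the identities $\Tr(\bY^\top\tilde\ba^f\bz^\top) = {\bm b}^\top\bz$ and $\|\tilde\ba^f\bz^\top\|_F^2 = c\,\|\bz\|_2^2$ give $\tfrac{1}{2}\|\bY - \tilde\ba^f\bz^\top\|_F^2 = \tfrac{1}{2}\|\bY\|_F^2 - {\bm b}^\top\bz + \tfrac{c}{2}\|\bz\|_2^2$. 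Because the feasible set is $\bz \geq \bzero$, I may replace $\|\bz\|_1$ by $\bone^\top\bz$; completing the square in the quadratic and linear terms then shows that \eqref{eq:neurononecomp} is equivalent, up to an additive constant, to minimizing
\[
g(\bz) = \frac{c}{2}\left\|\bz - {\bm v}\right\|_2^2 + \mu\,\|\bz\|_2, \qquad {\bm v} = \frac{\bY^\top\tilde\ba^f - \lambda\alpha\bone}{c}, \quad \mu = \lambda(1-\alpha),
\]
over $\bz \geq \bzero$. Using $c > 0$, the asserted solution \eqref{eq:neuronsinglezsoln} is precisely $\hat\bz = \left(1 - \frac{\mu/c}{\|{\bm v}_+\|_2}\right)_+ {\bm v}_+$, where ${\bm v}_+$ denotes ${\bm v}$ with its negative entries replaced by zero.

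Since $c > 0$, the map $g$ is strictly convex, so the constrained minimizer is unique, and it suffices to exhibit a subgradient ${\bm s} \in \partial g(\hat\bz) = c(\hat\bz - {\bm v}) + \mu\,\partial\|\hat\bz\|_2$ satisfying the normal-cone condition for the non-negative orthant: $s_i = 0$ wherever $\hat z_i > 0$ and $s_i \geq 0$ wherever $\hat z_i = 0$. In the generic case $\|{\bm v}_+\|_2 > \mu/c$, so that $\hat\bz = \theta\,{\bm v}_+ \neq \bzero$ with $\theta = 1 - (\mu/c)/\|{\bm v}_+\|_2 \in (0,1]$, I would use the identity $\hat\bz/\|\hat\bz\|_2 = {\bm v}_+/\|{\bm v}_+\|_2$ and a short calculation to check that $s_i = 0$ on the coordinates with $v_i > 0$ and $s_i = -c\,v_i \geq 0$ on the coordinates with $v_i \leq 0$, exactly as required.

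The step I expect to be most delicate is the degenerate case $\hat\bz = \bzero$, which arises when $\|{\bm v}_+\|_2 \leq \mu/c$: here $\partial\|\hat\bz\|_2$ is the entire ball $\{{\bm g} : \|{\bm g}\|_2 \leq \mu\}$, and I must produce an explicit element of it that is simultaneously consistent with dual feasibility of the non-negativity constraint. The choice ${\bm g} = c\,{\bm v}_+$ resolves this, since $\|{\bm g}\|_2 = c\,\|{\bm v}_+\|_2 \leq \mu$ and the resulting ${\bm s} = c(\hat\bz - {\bm v}) + {\bm g} = c({\bm v}_+ - {\bm v})$ satisfies $s_i = c(v_{i,+} - v_i) \geq 0$ for every $i$. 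Having verified optimality in both cases, uniqueness from strict convexity identifies $\hat\bz$ as the minimizer, and back-substituting ${\bm v}_+ = (\bY^\top\tilde\ba^f - \lambda\alpha\bone)_+/c$ and $\mu = \lambda(1-\alpha)$ into $\hat\bz = \left(1 - \frac{\mu/c}{\|{\bm v}_+\|_2}\right)_+{\bm v}_+$ reproduces \eqref{eq:neuronsinglezsoln} exactly.
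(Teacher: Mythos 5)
Your proof is correct, and its first half coincides with the paper's: you use $\|\bz\|_1=\bone^\top\bz$ on the feasible set and complete the square to reduce \eqref{eq:neurononecomp} to the non-negatively constrained group-lasso proximal problem $\min_{\bz\geq\bzero}\,\tfrac{c}{2}\|\bz-{\bm v}\|_2^2+\mu\|\bz\|_2$ with $c=(\tilde\ba^f)^\top\tilde\ba^f$, ${\bm v}=(\bY^\top\tilde\ba^f-\lambda\alpha\bone)/c$, $\mu=\lambda(1-\alpha)$, exactly as in Appendix~\ref{app:neurononecomp}. Where you genuinely diverge is in how this reduced problem is solved. The paper proves it as a standalone result (Lemma~\ref{lem:nonnegsoln}): an exchange argument shows any minimizer vanishes on the coordinates where ${\bm v}$ is negative, and on the remaining coordinates the constraint is inactive, so the known unconstrained proximal formula of \citet{simon2013sparse} applies verbatim. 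You instead verify the claimed minimizer against the optimality condition $\bzero\in\partial g(\hat\bz)+N_{\mathbb{R}^T_+}(\hat\bz)$, splitting into the non-degenerate case, where $\hat\bz\propto{\bm v}_+\neq\bzero$ and the $\ell_2$-subgradient is a singleton, and the degenerate case $\hat\bz=\bzero$, where you exhibit the explicit certificate $c\,{\bm v}_+$ in the ball of radius $\mu$; strict convexity (from $c>0$) then gives uniqueness and closes the argument. Your route buys self-containment --- no appeal to an external prox formula is needed, and the zero case is handled by an explicit dual certificate rather than by citation --- at the price of being a verification of a guessed formula rather than a derivation of it. Two cosmetic slips, neither of which breaks anything: at $\hat\bz=\bzero$ the ball $\{{\bm g}:\|{\bm g}\|_2\leq\mu\}$ is $\mu\,\partial\|\hat\bz\|_2$, not $\partial\|\hat\bz\|_2$ itself (your subsequent computation treats it correctly), and $\theta=1$ occurs only when $\lambda(1-\alpha)=0$, so writing $\theta\in(0,1]$ is fine but the endpoint is non-generic.
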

The proof of Lemma~\ref{lem:neurononecomp} is in Appendix~\ref{app:neurononecomp}. We can inspect the solution \eqref{eq:neuronsinglezsoln} to gain intuition. Recall that $\tilde\ba^f_{\cdot,k}\equiv\ba^f_{\cdot,k}/\| \ba^f_{\cdot,k}\|_2^2$, where $\ba^f_{\cdot,k}$ has binary elements. Therefore, $\bY^\top\tilde\ba^f\in\mathbb{R}^{T}$ is the average fluorescence of pixels in the filtered dictionary element at each of the frames and $\frac{1}{(\tilde\ba^f)^\top\tilde\ba^f}$ equals the number of pixels in the dictionary element. When $\lambda=0$, $\hat\bz=\left(\frac{\bY^\top\tilde\ba^f}{(\tilde\ba^f)^\top\tilde\ba^f}\right)_+$, which is simply the positive part of the total fluorescence at all pixels in the dictionary element over time. We now consider the impact of $\lambda$ for three different settings of $\alpha$:
\begin{itemize}
\item \textit{$\alpha=1$:} In this setting, $\hat\bz$ is the positive part of the soft-thresholded total fluorescence. This encourages elements of $\hat\bz$ to be exactly zero for frames in which the dictionary element has low fluorescence. 
\item \textit{$\alpha=0$:} In this setting, $\hat\bz$ is found by scaling all elements of the total fluorescence by the same amount. Thus, individual elements of $\hat\bz$ are not encouraged to be 0, though $\hat\bz=\bzero$ if the dictionary element has a low amount of fluorescence across all frames (i.e., $\|\bY^\top\tilde\ba^f\|_2$ small) or $\lambda$ is very large.
\item \textit{$\alpha\in(0,1)$:} Both soft-thresholding and soft-scaling are performed, which encourages sparsity of individual elements of $\hat\bz$ and the entire vector $\hat\bz$, respectively.
\end{itemize}
In Figure~\ref{fig:solnseq}, we illustrate the values of $\hat\bz$ for the three scenarios described above.

\begin{figure}
\begin{center}
\includegraphics[width=\textwidth]{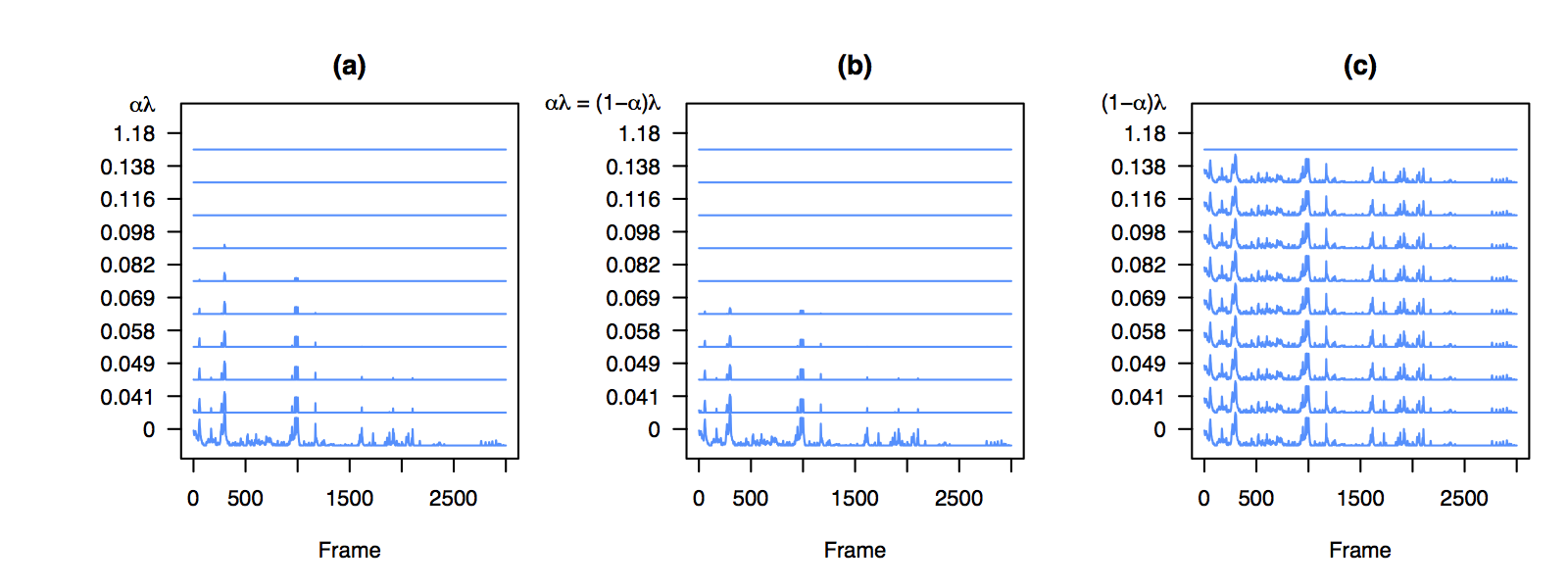}
\end{center}
\caption{For a single dictionary element in the example video, we plot the solution $\hat\bz$, as given in \eqref{eq:neuronsinglezsoln}, for a range of $\lambda$ when (a) only a lasso penalty is used ($\alpha=1$), (b) a mixture of penalties is used ($\alpha=0.5$), and (c) only a group lasso penalty is used ($\alpha=0$).}
\label{fig:solnseq}
\end{figure}

\subsection{Algorithm}
\label{subsec:neuronalg}

We now consider how to solve \eqref{eq:neuronsgl} for ${K_f}>1$. While generalized gradient descent \citep{beck2009fast} can be used to solve concurrently for $\bz_{1,\cdot},\ldots,\bz_{{K_f},\cdot}$ in \eqref{eq:neuronsgl}, the problem is solved more efficiently by noting that \eqref{eq:neuronsgl} is decomposable into groups of overlapping spatial components.

Let $\mathcal{N}_1,\ldots,\mathcal{N}_S$ denote a partition of the ${K_f}$ elements of the filtered dictionary, such that $\mathcal{N}_s \cap \mathcal{N}_{s'} = \emptyset$ for $s \neq s'$, and $\cup_{s=1}^S \mathcal{N}_s = \{1,\ldots,{K_f}\}$. 
Define the mapping $\mathcal{M}(\mathcal{N}_s) = \{ p \in (1,\ldots,P): (\tilde{\bA^f}_{p,\mathcal{N}_s})^\top \bone > 0\}$.  That is, $\mathcal{M}(\mathcal{N}_s)$ indexes the set of pixels that are active in that subset of neurons. 
\begin{lemma}
Suppose that $\mathcal{M}(\mathcal{N}_s) \cap \mathcal{M}(\mathcal{N}_{s'}) = \emptyset$ for all $s \neq s'$, so that there is no spatial overlap between the sets of filtered dictionary elements $\mathcal{N}_1,\ldots,\mathcal{N}_{S}$. 
Then solving \eqref{eq:neuronsgl} gives the same solution as solving
\begin{equation}
\label{eq:neuronreducesgl}
\displaystyle \underset{\bZ_{\mathcal{N}_s, \cdot}\in\mathbb{R}_+^{\vert \mathcal{N}_s\vert \times T}}{\mathrm{minimize}} \quad  \frac{1}{2}  \left\| \bY_{\mathcal{M}(\mathcal{N}_s), \cdot} - \tilde\bA^f_{\mathcal{M}(\mathcal{N}_s),\mathcal{N}_{s}}\bZ_{\mathcal{N}_{s}, \cdot}\right\|_F^2+ \lambda\alpha \sum_{n\in\mathcal{N}_s} \left\| \bz_{n,\cdot}\right\|_1+\lambda(1-\alpha) \sum_{n\in\mathcal{N}_s} \left\| \bz_{n,\cdot}\right\|_2,
\end{equation}
for $s=1,2,\ldots,S$.
\label{lem:neurondecomp}
\end{lemma}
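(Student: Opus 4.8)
The plan is to exhibit the objective in \eqref{eq:neuronsgl} as a sum of $S$ terms, the $s$th of which depends only on the block of rows $\bZ_{\mathcal{N}_s,\cdot}$, so that the joint minimization splits into the $S$ independent subproblems \eqref{eq:neuronreducesgl}. The two penalty sums and the constraint $\bZ\geq\bzero$ separate immediately: each penalty is a sum over individual rows $\bz_{k,\cdot}$, and the partition $\mathcal{N}_1,\ldots,\mathcal{N}_S$ merely regroups these rows, while nonnegativity is imposed row by row. Hence the entire content of the lemma is that the reconstruction term $\frac{1}{2}\|\bY-\tilde\bA^f\bZ\|_F^2$ also decomposes across the groups.

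First I would rewrite the Frobenius term as a sum over pixels (rows),
\begin{equation}
\left\|\bY-\tilde\bA^f\bZ\right\|_F^2=\sum_{p=1}^{P}\left\|\by_{p,\cdot}^\top-\tilde\bA^f_{p,\cdot}\bZ\right\|_2^2,\nonumber
\end{equation}
and observe that the $p$th summand involves $\bz_{k,\cdot}$ only through those $k$ with $\tilde{a}^f_{p,k}>0$, i.e.\ those dictionary elements whose support contains pixel $p$.

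The crux --- and the only step requiring real care --- is the claim that each pixel $p$ ``sees'' at most one group. By the disjointness hypothesis $\mathcal{M}(\mathcal{N}_s)\cap\mathcal{M}(\mathcal{N}_{s'})=\emptyset$ for $s\neq s'$, if pixel $p$ lies in the support of some element of $\mathcal{N}_s$ (so $p\in\mathcal{M}(\mathcal{N}_s)$), then it cannot lie in the support of any element of a different group $\mathcal{N}_{s'}$; thus $\tilde{a}^f_{p,k}=0$ for every $k\notin\mathcal{N}_s$. Consequently, for $p\in\mathcal{M}(\mathcal{N}_s)$ the $p$th summand equals $\|\by_{p,\cdot}^\top-\tilde\bA^f_{p,\mathcal{N}_s}\bZ_{\mathcal{N}_s,\cdot}\|_2^2$, which depends only on $\bZ_{\mathcal{N}_s,\cdot}$; and for any pixel $p$ in no group the summand reduces to the constant $\|\by_{p,\cdot}\|_2^2$. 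Grouping the pixel sums accordingly yields
\begin{equation}
\left\|\bY-\tilde\bA^f\bZ\right\|_F^2=\sum_{s=1}^{S}\left\|\bY_{\mathcal{M}(\mathcal{N}_s),\cdot}-\tilde\bA^f_{\mathcal{M}(\mathcal{N}_s),\mathcal{N}_s}\bZ_{\mathcal{N}_s,\cdot}\right\|_F^2+C,\nonumber
\end{equation}
with $C$ independent of $\bZ$.

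Finally I would combine this with the separated penalties and constraint to write the full objective of \eqref{eq:neuronsgl} as $\sum_{s=1}^{S} g_s(\bZ_{\mathcal{N}_s,\cdot})+\frac{1}{2}C$, where $g_s$ is exactly the objective of \eqref{eq:neuronreducesgl}. Since the blocks $\bZ_{\mathcal{N}_1,\cdot},\ldots,\bZ_{\mathcal{N}_S,\cdot}$ are variationally independent (disjoint sets of rows, each carrying its own nonnegativity constraint) and the additive constant is irrelevant to the argmin, minimizing the sum is equivalent to minimizing each $g_s$ separately, which establishes the claim. I do not anticipate any genuine obstacle beyond bookkeeping; the single point worth stating cleanly is the vanishing of the off-group entries $\tilde{a}^f_{p,k}=0$ for $p\in\mathcal{M}(\mathcal{N}_s),\,k\notin\mathcal{N}_s$, which is precisely where the disjoint-support hypothesis enters.
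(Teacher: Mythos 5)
Your proof is correct and follows essentially the same route as the paper's: both hinge on the single observation that the disjointness of $\mathcal{M}(\mathcal{N}_1),\ldots,\mathcal{M}(\mathcal{N}_S)$ forces the off-group blocks $\tilde\bA^f_{\mathcal{M}(\mathcal{N}_s),\mathcal{N}_{s'}}$ to vanish for $s \neq s'$, so that the Frobenius reconstruction term decomposes across the groups while the penalties and the nonnegativity constraint separate row by row. Your write-up is in fact slightly more careful than the paper's, since you explicitly account for pixels lying in no group via the additive constant $C$, a detail the paper's chain of equalities glosses over.
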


The proof of Lemma~\ref{lem:neurondecomp} is in Appendix~\ref{app:neurondecomp}.

Our approach to solving \eqref{eq:neuronreducesgl} depends on the size of $\mathcal{N}_s$. For $\vert \mathcal{N}_s\vert=1$, we can simply use the closed-form solution for $\bz_{\mathcal{N}_s, \cdot}$ given by Lemma~\ref{lem:neurononecomp}. This is advantageous as the calcium imaging data sets that we have analyzed often have some dictionary elements that do not overlap with any others. For $\vert\mathcal{N}_s\vert>1$, we use generalized gradient descent to solve for the global optimum of \eqref{eq:neuronreducesgl} \citep{beck2009fast}. 

In light of Lemma~\ref{lem:neurondecomp}, in order to solve \eqref{eq:neuronsgl}, we first partition the filtered dictionary elements into $S$ sets, $\mathcal{N}_1,\ldots,\mathcal{N}_S$, such that there is no overlap between the pixels in the $S$ sets, and so that no set can be partitioned further. This can be done quickly, as outlined in Step 1 of Algorithm~\ref{alg:neuronfastalg}. Then, we solve \eqref{eq:neuronreducesgl} for $s=1,\ldots,S$. Details are provided in Algorithm~\ref{alg:neuronfastalg}. 

We typically solve \eqref{eq:neuronsgl} for a sequence of exponentially decreasing $\lambda$ values. To improve computational performance, Step 2(b) of Algorithm~\ref{alg:neuronfastalg} can be implemented using warm starts, in which $\bZ_{\mathcal{N}_s, \cdot}^{(0)}$ is initialized as the solution for $\bZ_{\mathcal{N}_s, \cdot}$ for the previous value of $\lambda$. Additional details regarding the derivation of the generalized gradient descent algorithm used in Step 2(b) of Algorithm~\ref{alg:neuronfastalg} are given in Appendix~\ref{app:neuronappggd}.

\begin{algorithm}
\caption{--- Algorithm for Solving Equation \eqref{eq:neuronsgl}} \vskip .2 cm
\label{alg:neuronfastalg}
\begin{enumerate}
\item Construct the adjacency matrix $\bN\in\mathbb{R}^{{K_f}\times {K_f}}$ with $n_{i,j}=\begin{cases}1 &\text{if }(\ba^f_{\cdot,i})^\top\ba^f_{\cdot,j}>0\\0&\text{if }(\ba^f_{\cdot,i})^\top\ba^f_{\cdot,j}=0 \end{cases}.$ Let $\mathcal{N}_1,\mathcal{N}_2,\ldots,\mathcal{N}_S$ denote the connected components of the graph corresponding to $\bN$. That is, $\mathcal{N}_s$ indexes the filtered dictionary elements in the $s$th connected component. Define the mapping $\mathcal{M}(\mathcal{N}_s)=\{ p\in(1,\ldots,P) : (\tilde\bA^f_{p, \mathcal{N}_s})^\top\bone>0\}$. \vskip .2 cm
\item For $s=1,2,\ldots,S$, solve
\begin{equation}
\label{eq:neuronsgllittle}
\displaystyle \underset{\bZ_{\mathcal{N}_s, \cdot}\geq 0}{\mathrm{minimize}} \quad  \frac{1}{2}  \left\| \bY_{\mathcal{M}(\mathcal{N}_s), \cdot} - \tilde\bA^f_{\mathcal{M}(\mathcal{N}_s),\mathcal{N}_{s}}\bZ_{\mathcal{N}_{s}, \cdot}\right\|_F^2+ \lambda\alpha \sum_{n\in\mathcal{N}_s} \left\| \bz_{n,\cdot}\right\|_1+\lambda(1-\alpha) \sum_{n\in\mathcal{N}_s} \left\| \bz_{n,\cdot}\right\|_2,
\end{equation}
using one of the two following approaches:
\begin{enumerate}
\item By Lemma~\ref{lem:neurononecomp}, if $\vert \mathcal{N}_s\vert=1$, the closed-form solution for $\bz_{\mathcal{N}_s ,\cdot}$ in \eqref{eq:neuronsgllittle} is
\begin{equation}
\label{eq:neuronsinglezsolnalg}
\hat\bz_{\mathcal{N}_s, \cdot}=\left( 1-\frac{\lambda(1-\alpha)}{\left\|\left( (\bY_{\mathcal{M}(\mathcal{N}_s),\cdot})^\top\tilde\ba^f_{\mathcal{M}(\mathcal{N}_s),\mathcal{N}_s} -\lambda\alpha\bone \right)_+\right\|_2}\right)_+ \left( \frac{(\bY_{\mathcal{M}(\mathcal{N}_s),\cdot})^\top\tilde\ba^f_{\mathcal{M}(\mathcal{N}_s),\mathcal{N}_s} -\lambda\alpha\bone}{(\tilde\ba^f_{\mathcal{M}(\mathcal{N}_s),\mathcal{N}_s})^\top\tilde\ba^f_{\mathcal{M}(\mathcal{N}_s),\mathcal{N}_s}}\right)_+.
\end{equation}
\item If $\vert \mathcal{N}_s\vert>1$, use generalized gradient descent to solve \eqref{eq:neuronsgllittle} for $\bZ_{\mathcal{N}_s, \cdot}$:
\begin{enumerate}
\item Let $f\left(\bZ_{\mathcal{N}_s, \cdot}\right)= \frac{1}{2}  \left \| \bY_{\mathcal{M}(\mathcal{N}_s),\cdot} - \tilde\bA^f_{\mathcal{M}(\mathcal{N}_s),\mathcal{N}_s}\bZ_{\mathcal{N}_s ,\cdot}\right\|_F ^2 + \lambda\alpha \bone^\top\bZ_{\mathcal{N}_s, \cdot}\bone$. Initialize $\bZ_{\mathcal{N}_s, \cdot}^{(0)} := \bzero$ and let 
$t:=(\max_{n\in\mathcal{N}_s} \sum_{j\in\mathcal{N}_s} (\tilde\ba^f_{\mathcal{M}(\mathcal{N}_s),j})^\top\tilde\ba^f_{\mathcal{M}(\mathcal{N}_s),n})^{-1}$. \vskip .2 cm
\item For $b=1,2,\ldots$, until convergence, iterate:\vskip .25 cm
\hskip 1cm $\nabla f\left(\bZ_{\mathcal{N}_s, \cdot}^{(b-1)}\right):=-(\tilde\bA^f_{\mathcal{M}(\mathcal{N}_s),\mathcal{N}_s})^\top\left(\bY_{\mathcal{M}(\mathcal{N}_s),\cdot} - \tilde\bA^f_{\mathcal{M}(\mathcal{N}_s),\mathcal{N}_s}\bZ_{\mathcal{N}_s, \cdot}^{(b-1)} \right) +\lambda\alpha\bone\bone^\top$,\\
\hskip 1cm $\tilde\bY_{\mathcal{N}_s,\cdot}:=\bZ_{\mathcal{N}_s, \cdot}^{(b-1)}-t\nabla f\left(\bZ_{\mathcal{N}_s ,\cdot}^{(b-1)}\right)$, and\\ 
\hskip 1cm $\bz^{(b)}_{n,\cdot}:= \left(1-\frac{\lambda(1-\alpha)t}{\left\| \left( \tilde\by_{n,\cdot}\right)_+\right\|_2} \right)_+ \left( \tilde\by_{n,\cdot}\right)_+$ for $n\in\mathcal{N}_s$.
\end{enumerate}
\end{enumerate}
\end{enumerate}
\end{algorithm}

\subsection{Scaling of $\bA^f$}
\label{subsec:neuronscaleA}

In \eqref{eq:neuronsgl}, the $k$th column of the matrix $\tilde\bA^f$ encodes the spatial mapping of the $k$th filtered dictionary element, after scaling. 
To obtain $\tilde\bA^f$, we divide the $k$th column of $\bA^f$ by $\| \ba^f_{\cdot, k}\|_2^2$, the number of pixels in the $k$th filtered dictionary element. This scaling is performed so that the sizes of the dictionary elements do not impact when the components enter the model. That is, we would like $\| \ba^f_{\cdot, k}\|_2^2$ to be independent of the largest value of $\lambda$ for which $\hat\bz_{k,\cdot}\neq\bzero$. The following lemma supports this particular scaling of the columns of $\bA^f$. 

\begin{lemma}
\label{lem:Ascale}
Suppose $\bY=\bA^f\bZ^*$ where the following conditions hold:
\begin{enumerate}
\item[(i)] $\bA^f\in\mathbb{R}^{P\times {K_f}}$ with $(\ba^f_{\cdot, 1})^\top\ba^f_{\cdot, 2}=0$, $(\ba^f_{\cdot, 1})^\top\ba^f_{\cdot, k}=0$ for $k=3,\ldots,{K_f}$, and $(\ba^f_{\cdot, 2})^\top\ba^f_{\cdot, k}=0$ for $k=3,\ldots,{K_f}$ and
\item[(ii)] $\bZ^*\in\mathbb{R}^{{K_f}\times T}$ with $\bz^*_{1, \cdot}=\bP\bz^*_{2, \cdot}$ for some $T\times T$ permutation matrix $\bP$.
\end{enumerate}
If we solve \eqref{eq:neuronsgl} for $\bZ$ with $\tilde\bA^f$ such that $\tilde\ba^f_{\cdot, k}=\ba^f_{\cdot, k}/\| \ba^f_{\cdot, k}\|_2^2$, then $\hat\bz_{1, \cdot}=\bzero$ if and only if $\hat\bz_{2, \cdot}=\bzero$.
\end{lemma}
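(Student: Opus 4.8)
The plan is to exploit the orthogonality hypothesis~(i) to reduce \eqref{eq:neuronsgl} to a pair of independent single-component problems, one for $\bz_{1,\cdot}$ and one for $\bz_{2,\cdot}$, and then to show that condition~(ii) forces these two solutions to be zeroed out for exactly the same $(\lambda,\alpha)$. First I would invoke the decomposition result: since $\ba^f_{\cdot,1}$ and $\ba^f_{\cdot,2}$ are orthogonal to each other and to every other column of $\bA^f$, dictionary elements $1$ and $2$ each form their own singleton connected component in the adjacency graph of Algorithm~\ref{alg:neuronfastalg}. Lemma~\ref{lem:neurondecomp} then guarantees that the minimizers $\hat\bz_{1,\cdot}$ and $\hat\bz_{2,\cdot}$ of \eqref{eq:neuronsgl} coincide with the minimizers of the corresponding single-component problems \eqref{eq:neurononecomp}, so Lemma~\ref{lem:neurononecomp} supplies them in closed form via \eqref{eq:neuronsinglezsoln}.

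Next I would read off the precise zeroing condition from \eqref{eq:neuronsinglezsoln}. Inspecting the leading scalar factor, one sees that $\hat\bz_{k,\cdot}=\bzero$ if and only if
\[
\left\| \left( \bY^\top \tilde\ba^f_{\cdot,k} - \lambda\alpha\bone \right)_+ \right\|_2 \le \lambda(1-\alpha),
\]
so it suffices to show that the vector $\bY^\top\tilde\ba^f_{\cdot,k}$ enters this inequality identically for $k=1$ and $k=2$ (the inner product is unchanged by restricting to the support $\mathcal{M}(\{k\})$, since $\tilde\ba^f_{\cdot,k}$ vanishes off it). Here the particular scaling is essential: writing $\bY=\bA^f\bZ^*$ and using that $(\ba^f_{\cdot,j})^\top\tilde\ba^f_{\cdot,k}=(\ba^f_{\cdot,j})^\top\ba^f_{\cdot,k}/\|\ba^f_{\cdot,k}\|_2^2$ together with the orthogonality in~(i), every cross term vanishes and the self term is exactly $1$, giving $\bY^\top\tilde\ba^f_{\cdot,k}=\bz^*_{k,\cdot}$ for $k=1,2$. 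Without the factor $1/\|\ba^f_{\cdot,k}\|_2^2$ this would instead equal $\|\ba^f_{\cdot,k}\|_2^2\,\bz^*_{k,\cdot}$, and the element sizes would shift the threshold, which is exactly the effect the scaling is designed to remove.

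Finally I would use condition~(ii). From $\bz^*_{1,\cdot}=\bP\bz^*_{2,\cdot}$ and $\bP\bone=\bone$ we obtain $\bY^\top\tilde\ba^f_{\cdot,1}-\lambda\alpha\bone=\bP(\bY^\top\tilde\ba^f_{\cdot,2}-\lambda\alpha\bone)$; since a permutation commutes with the element-wise positive part and preserves the $\ell_2$ norm, the left-hand sides of the two zeroing inequalities are equal, whence $\hat\bz_{1,\cdot}=\bzero \Leftrightarrow \hat\bz_{2,\cdot}=\bzero$. The argument is largely bookkeeping; the only points that require genuine care are correctly isolating the group-level scalar factor in \eqref{eq:neuronsinglezsoln} as the sole determinant of whether a component is zeroed (with the degenerate case $(\bY^\top\tilde\ba^f_{\cdot,k}-\lambda\alpha\bone)_+=\bzero$ handled separately) and verifying the commutation of the positive-part operator with $\bP$. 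I expect no substantive obstacle beyond these.
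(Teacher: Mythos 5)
Your proposal is correct and follows essentially the same route as the paper's proof: reduce to single-component problems via Lemma~\ref{lem:neurondecomp}, read the zeroing threshold off the closed form in Lemma~\ref{lem:neurononecomp}, use the scaling and orthogonality to get $\bY^\top\tilde\ba^f_{\cdot,k}=\bz^*_{k,\cdot}$, and conclude by the permutation invariance of the positive part and the $\ell_2$ norm. No substantive differences to report.
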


The proof of Lemma~\ref{lem:Ascale} is in Appendix~\ref{app:neuronAscale}. Lemma~\ref{lem:Ascale} indicates that two non-overlapping spatial components, possibly of different sizes, whose temporal components are identical up to a permutation, will enter the model at the same value of $\lambda$. In Figure~\ref{fig:scaleA}, we provide empirical evidence for the chosen scaling of $\bA^f$. 

\begin{figure}
\begin{center}
\includegraphics[width=\textwidth]{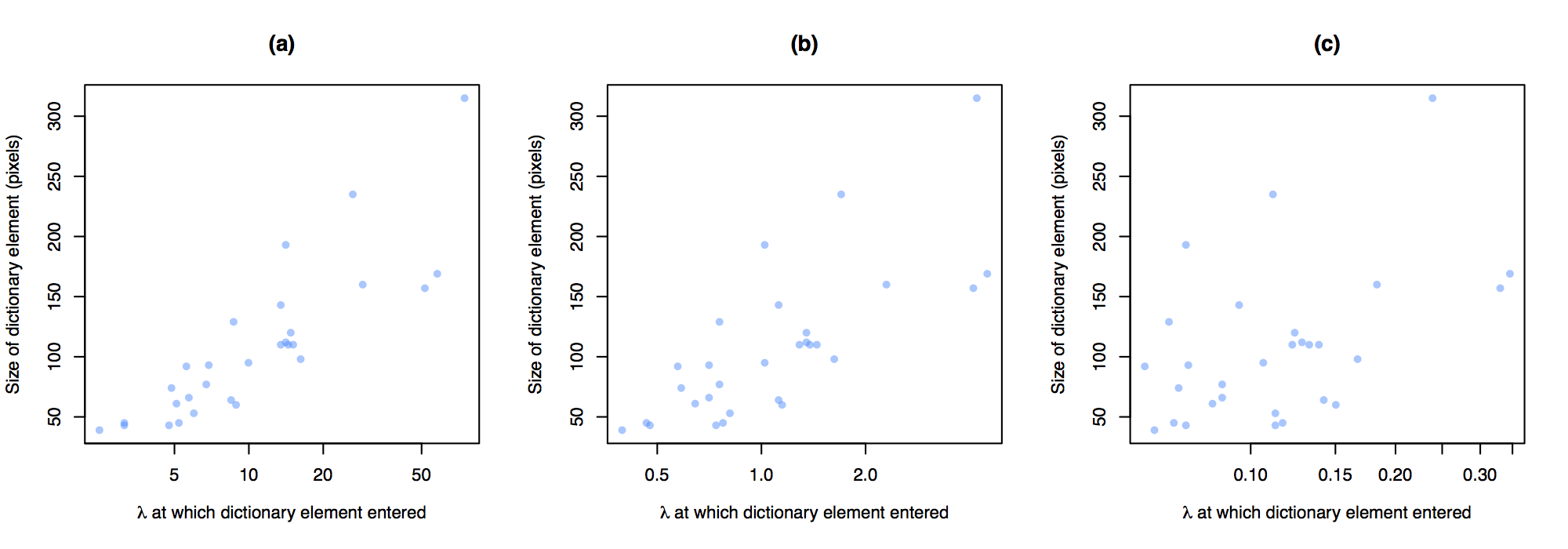}
\end{center}
\caption{We solve \eqref{eq:neuronsgl} for different scalings of $\tilde\bA^f$. For each spatial component, we note the value of $\lambda$ at which the spatial component enters the model (i.e., the largest $\lambda$ for which $\hat\bz_{k,\cdot}\neq\bzero$). We plot the size of each spatial component versus the value of $\lambda$ at which the spatial component enters for (a) $\tilde\ba^f_{\cdot, k}=\ba^f_{\cdot, k}$, (b) $\tilde\ba^f_{\cdot, k}=\ba^f_{\cdot, k}/\|\ba^f_{\cdot ,k}\|_2$, and (c) $\tilde\ba^f_{\cdot, k}=\ba^f_{\cdot, k}/\|\ba^f_{\cdot, k}\|_2^2$. There is a high correlation in the scatterplots in panels (a) and (b), but little correlation in (c). This motivates us to use the scaling $\tilde\ba^f_{\cdot, k}=\ba^f_{\cdot, k}/\|\ba^f_{\cdot, k}\|_2^2$ in Step 3, so that dictionary elements receive a fair shot of selection by the sparse group lasso \eqref{eq:neuronsgl}, regardless of their size.}
\label{fig:scaleA}
\end{figure}

\subsection{Sparsity of the Solution}
\label{subsec:neuronsparse}

We now consider the range of $\lambda$ for which the solution to \eqref{eq:neuronsgl} is completely sparse (i.e., $\hat\bZ=\bzero$) for a fixed value of $\alpha$. 

\begin{lemma}
\label{lem:maxlam}
For any $\alpha\in[0,1]$, the solution to \eqref{eq:neuronsgl} is completely sparse if and only if 
\begin{equation}
\lambda(1-\alpha)  \geq \left\|  \left(\left[(\tilde\bA^f)^\top\bY\right]_{k,\cdot} - \lambda\alpha\bone \right)_+ \right\|_2
\label{eq:neuronlameq}
\end{equation}
for $k=1,\ldots,{K_f}$.
\end{lemma}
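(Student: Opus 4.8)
The plan is to characterize $\hat\bZ=\bzero$ as a solution of the convex program \eqref{eq:neuronsgl} through its first-order (subgradient) optimality conditions, and then to reduce the resulting feasibility question to the stated closed-form inequality. Since the objective in \eqref{eq:neuronsgl} is convex and the feasible set $\{\bZ\geq 0\}$ is convex, $\bzero$ is a global minimizer if and only if there is a subgradient $\bm{g}$ of the objective at $\bzero$ whose negative lies in the normal cone to the non-negative orthant at $\bzero$. That normal cone is the non-positive orthant, so the condition is simply that the objective admits a subgradient at $\bzero$ that is entrywise non-negative.

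First I would compute this subgradient explicitly. The smooth term $\tfrac12\|\bY-\tilde\bA^f\bZ\|_F^2$ is differentiable with gradient $-(\tilde\bA^f)^\top(\bY-\tilde\bA^f\bZ)$, which equals $-(\tilde\bA^f)^\top\bY$ at $\bZ=\bzero$. Because the two penalty terms are separable across the rows of $\bZ$, the optimality condition decouples into one condition per dictionary element $k$: writing $\bm{w}_k\equiv[(\tilde\bA^f)^\top\bY]_{k,\cdot}$, we need the existence of $\bm{s}_k$ with $\|\bm{s}_k\|_\infty\leq 1$ (a subgradient of $\|\cdot\|_1$ at $\bzero$) and $\bm{u}_k$ with $\|\bm{u}_k\|_2\leq 1$ (a subgradient of $\|\cdot\|_2$ at $\bzero$) such that
$$-\bm{w}_k+\lambda\alpha\,\bm{s}_k+\lambda(1-\alpha)\,\bm{u}_k \;\geq\; \bzero \qquad \text{(entrywise)}.$$

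Next I would solve this feasibility problem in closed form. Since each entry of $\bm{s}_k$ ranges freely over $[-1,1]$, the left-hand side is maximized entrywise by taking $\bm{s}_k=\bone$; hence the system is feasible if and only if there is $\bm{u}_k$ with $\|\bm{u}_k\|_2\leq 1$ satisfying $\lambda(1-\alpha)\,u_{k,t}\geq w_{k,t}-\lambda\alpha$ for every frame $t$. Assuming $\lambda(1-\alpha)>0$, this reads $u_{k,t}\geq (w_{k,t}-\lambda\alpha)/(\lambda(1-\alpha))$, and the minimum-$\ell_2$-norm vector meeting these lower bounds takes, in each coordinate, the value $\bigl((w_{k,t}-\lambda\alpha)/(\lambda(1-\alpha))\bigr)_+$ (zero wherever the bound is non-positive). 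Such a $\bm{u}_k$ of norm at most $1$ exists precisely when
$$\left\| \left( \frac{\bm{w}_k - \lambda\alpha\bone}{\lambda(1-\alpha)} \right)_{\!+} \right\|_2 \leq 1,$$
which, after clearing the positive scalar $\lambda(1-\alpha)$, is exactly \eqref{eq:neuronlameq}. Requiring this for all $k=1,\ldots,K_f$ establishes both directions.

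Finally I would dispatch the boundary case $\lambda(1-\alpha)=0$ (i.e.\ $\alpha=1$), where the group-lasso term vanishes: there the per-row condition reduces to $\lambda\,s_{k,t}\geq w_{k,t}$ for some $\|\bm{s}_k\|_\infty\leq 1$, which is feasible iff $w_{k,t}\leq\lambda$ for all $t$, i.e.\ $\|(\bm{w}_k-\lambda\alpha\bone)_+\|_2=0=\lambda(1-\alpha)$; this is again \eqref{eq:neuronlameq}. The main obstacle I anticipate is the bookkeeping around the non-negativity constraint in the optimality condition together with the joint minimization over the two subgradient vectors $\bm{s}_k$ and $\bm{u}_k$: the crucial observation is that one may first saturate $\bm{s}_k=\bone$ and then choose the minimum-norm $\bm{u}_k$, which turns an existential feasibility statement into the sharp norm inequality above.
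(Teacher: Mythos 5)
Your proof is correct, but it follows a genuinely different route from the paper's. The paper proves this lemma by leaning on machinery it has already built: it invokes the decomposition of \eqref{eq:neuronsgl} into groups of non-overlapping components (Lemma~\ref{lem:neurondecomp}), applies the closed-form solution of Lemma~\ref{lem:neurononecomp} to singleton groups, and for groups with more than one element argues via the first iterate of the generalized gradient descent scheme in Algorithm~\ref{alg:neuronfastalg} --- starting from $\bZ^{(0)}=\bzero$, the prox step returns $\bzero$ exactly when the stated inequality holds, so $\bzero$ is a fixed point and hence the minimizer. Your argument instead characterizes optimality of $\bzero$ directly: a subgradient of the objective at $\bzero$ must have its negative in the normal cone of the non-negative orthant, the row-wise decoupling yields the feasibility system in $(\bm{s}_k,\bm{u}_k)$, and saturating $\bm{s}_k=\bone$ followed by the minimum-norm choice of $\bm{u}_k$ converts feasibility into the sharp inequality \eqref{eq:neuronlameq}. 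This is precisely the ``alternative'' route the paper mentions in a single closing sentence (``this lemma's result also follows from inspection of the optimality condition'') but never carries out. What your approach buys: it is self-contained (no dependence on Lemmas~\ref{lem:neurononecomp} and~\ref{lem:neurondecomp} or on properties of the prox-gradient map), it treats the ``if'' and ``only if'' directions symmetrically --- whereas the paper's write-up emphasizes sufficiency and leaves the converse to ``algebraic manipulation'' plus the implicit fact that fixed points of the prox-gradient iteration coincide with global minimizers --- and it handles the boundary case $\lambda(1-\alpha)=0$ explicitly. What the paper's approach buys: given the infrastructure already in place, its proof is a few lines and simultaneously validates the algorithm's behavior at the fully sparse solution.
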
 
Unfortunately, when $\alpha\in (0,1)$, $\lambda$ is on both sides of the inequality in \eqref{eq:neuronlameq}. Though we can solve for $\lambda$ in \eqref{eq:neuronlameq} using a root finder when $\alpha\in (0,1)$, the following corollary provides a simple alternative.

\begin{corollary}
\label{cor:neuronmaxlam}
For any $\alpha\in(0,1)$, if 
\begin{equation}
\lambda \geq \max_{k=1,\ldots,{K_f}} \left[ \min\left( \max_{l=1,\ldots,T} \frac{\left(\left[(\tilde\bA^f)^\top\bY\right]_{k,l}\right)_+}{\alpha}, \frac{\left\|\left(\left[ (\tilde\bA^f)^\top\bY\right]_{k,\cdot} \right)_+ \right\|_2}{1-\alpha}\right)\right],\nonumber
\end{equation}
then the solution to \eqref{eq:neuronsgl} is completely sparse.
\end{corollary}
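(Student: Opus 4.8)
The plan is to derive Corollary~\ref{cor:neuronmaxlam} directly from the exact sparsity characterization in Lemma~\ref{lem:maxlam}. By that lemma, $\hat\bZ=\bzero$ if and only if, for every $k=1,\ldots,{K_f}$,
\[
\lambda(1-\alpha)\geq\left\|\left(\left[(\tilde\bA^f)^\top\bY\right]_{k,\cdot}-\lambda\alpha\bone\right)_+\right\|_2.
\]
Since the corollary's hypothesis asks that $\lambda$ exceed the maximum over $k$ of the displayed minimum, it suffices to fix an arbitrary $k$ and show that
\[
\lambda\geq\min\left(\max_{l}\frac{\left(\left[(\tilde\bA^f)^\top\bY\right]_{k,l}\right)_+}{\alpha},\ \frac{\left\|\left(\left[(\tilde\bA^f)^\top\bY\right]_{k,\cdot}\right)_+\right\|_2}{1-\alpha}\right)
\]
implies the lemma's inequality for that $k$; taking the maximum over $k$ then yields complete sparsity. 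Abbreviating the $k$th row as $\bm{c}\in\mathbb{R}^T$, I would use the elementary equivalence that $\lambda\geq\min(A,B)$ holds exactly when $\lambda\geq A$ or $\lambda\geq B$, and treat the two branches separately.

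First, suppose $\lambda\geq\max_{l}(c_l)_+/\alpha$, so that $\lambda\alpha\geq(c_l)_+\geq c_l$ for every $l$. Then every entry of $\bm{c}-\lambda\alpha\bone$ is non-positive, its positive part is $\bzero$, and the right-hand side of the lemma's inequality vanishes; since $\alpha\in(0,1)$ and $\lambda>0$ give $\lambda(1-\alpha)>0$, the inequality holds. Second, suppose instead $\lambda\geq\|(\bm{c})_+\|_2/(1-\alpha)$, i.e.\ $\lambda(1-\alpha)\geq\|(\bm{c})_+\|_2$. Here I would invoke monotonicity of the positive-part operator: because $\lambda\alpha\geq0$, we have $(c_l-\lambda\alpha)_+\leq(c_l)_+$ entrywise, and comparing two non-negative vectors entrywise yields
\[
\left\|\left(\bm{c}-\lambda\alpha\bone\right)_+\right\|_2\leq\left\|\left(\bm{c}\right)_+\right\|_2\leq\lambda(1-\alpha),
\]
which is again the lemma's inequality.

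Combining the two branches shows that $\lambda\geq\min(A,B)$ forces the sparsity condition of Lemma~\ref{lem:maxlam} for index $k$, and since this holds for every $k$ under the corollary's hypothesis, we conclude $\hat\bZ=\bzero$. I do not expect any serious obstacle: the argument is purely a reduction to Lemma~\ref{lem:maxlam} together with two one-line observations. The only points requiring care are the equivalence ``$\lambda\geq\min(A,B)$ iff $\lambda\geq A$ or $\lambda\geq B$,'' which converts the outer minimum into a disjunction of sufficient conditions, and the restriction $\alpha\in(0,1)$ (endpoints excluded) that keeps both denominators strictly positive so the two candidate bounds are well defined.
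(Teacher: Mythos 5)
Your proposal is correct and follows essentially the same route as the paper: both reduce to the sparsity condition of Lemma~\ref{lem:maxlam}, split the minimum into the disjunction of the two branch conditions, and verify that either branch (all entries soft-thresholded to zero by $\lambda\alpha$, or $\lambda(1-\alpha)$ dominating $\|(\cdot)_+\|_2$ via monotonicity of the positive part) implies the lemma's inequality for each $k$. Your write-up simply makes explicit the two one-line observations that the paper's terser proof leaves implicit.
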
 

The condition in Corollary~\ref{cor:neuronmaxlam} is sufficient, but not necessary. Proofs of Lemma~\ref{lem:maxlam} and Corollary~\ref{cor:neuronmaxlam} can be found in Appendices~\ref{app:neuronmaxlam} and \ref{app:neuronmaxlam2}, respectively. An illustration of Lemma~\ref{lem:maxlam} and Corollary~\ref{cor:neuronmaxlam} is provided in Figure~\ref{fig:maxlam}. In Section~\ref{subsec:neurontuning3}, we discuss how the results from Lemma~\ref{lem:maxlam} and Corollary~\ref{cor:neuronmaxlam} can assist in selecting the value of the tuning parameter $\lambda$.

\begin{figure}
\begin{center}
\includegraphics[width=9cm]{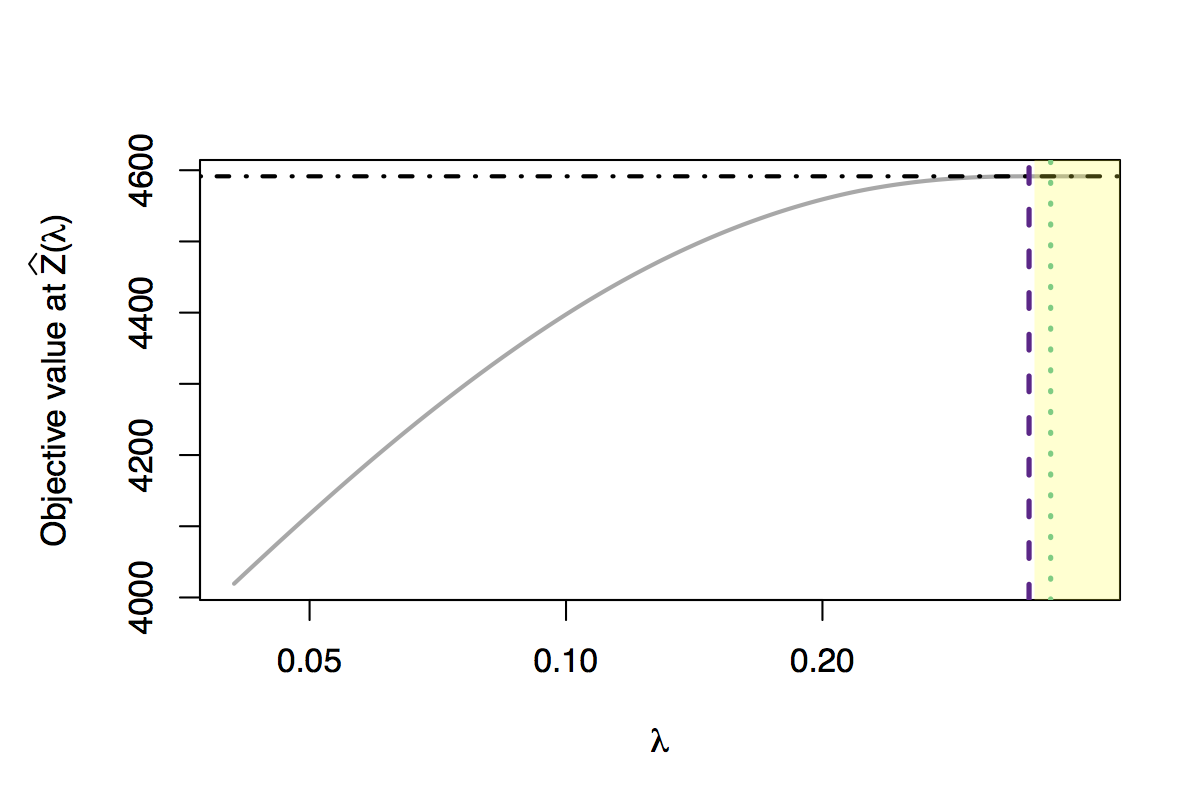}
\end{center}
\caption{We plot the value of the objective of \eqref{eq:neuronsgl} at $\hat\bZ(\lambda)$, the minimizer of \eqref{eq:neuronsgl} at $\lambda$, for a replicate of data as $\lambda$ varies. We compare two ways of finding a $\lambda$ large enough such that $\hat\bZ(\lambda)=\bzero$, which results in the objective shown as \protect\includegraphics[height=.19cm]{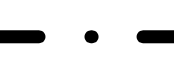}. We take $\lambda$ that satisfies Lemma~\ref{lem:maxlam} (\protect\includegraphics[height=.15cm]{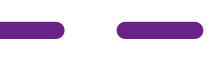}) or $\lambda$ as defined in Corollary~\ref{cor:neuronmaxlam} (\protect\includegraphics[height=.15cm]{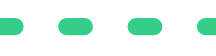}). The former (\protect\includegraphics[height=.15cm]{legend_dash1}) gives the smallest $\lambda$ such that $\hat\bZ(\lambda)=\bzero$. We can see this from the fact that the line (\protect\includegraphics[height=.15cm]{legend_dash1}) is on the boundary of the shaded box, which indicates the range of $\lambda$ for which the the objective value at $\hat\bZ(\lambda)$ equals the objective value at $\bzero$, i.e., the range of $\lambda$ for which $\hat\bZ(\lambda)=\bzero$.}
\label{fig:maxlam}
\end{figure}

\subsection{Zeroing Out of Double Neurons}
\label{subsec:neurondouble}

Some of the elements in the preliminary dictionary obtained in Step 1 may be \textit{double neurons}, i.e., elements that are a combination of two separate neurons. This occurs when two neighboring neurons are active during the same frame. In Step 2 of SCALPEL, these double neurons are unlikely to cluster with elements representing either of the individual neurons they combine, and thus there may be double neurons that remain in the filtered set of dictionary elements, $\bA^f$, used in Step 3 of SCALPEL. Fortunately, as detailed in the following lemma,  the group lasso penalty in \eqref{eq:neuronsgl} filters out these double neurons by estimating their temporal components to be the zero vector.

\begin{lemma}
\label{lem:doubleROI}
Suppose that the following conditions hold on $\bA^f\in\mathbb{R}^{P\times {K_f}}$:
\begin{enumerate}
 \item[(i)] $\ba^f_{\cdot, 3}=\ba^f_{\cdot, 1}+\ba^f_{\cdot, 2}$, 
 \item[(ii)] $(\ba^f_{\cdot, 1})^\top\ba^f_{\cdot, 2}=0$, 
 \item[(iii)] $(\ba^f_{\cdot, 1})^\top\ba^f_{\cdot, k}=0$ for $k=4,\ldots,{K_f}$, and 
 \item[(iv)] $(\ba^f_{\cdot, 2})^\top\ba^f_{\cdot, k}=0$ for $k=4,\ldots,{K_f}$.
 \end{enumerate}
 Then, define $\tilde\ba^f_{\cdot,k}\equiv\ba^f_{\cdot,k}/\| \ba^f_{\cdot,k}\|_2$, and consider solving \eqref{eq:neuronsgl} for $\bZ$ with $\alpha<1$. Then, $\hat\bz_{3, \cdot}=\bzero$.
\end{lemma}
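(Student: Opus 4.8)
The plan is to show that any mass the sparse group lasso might place on the double-neuron component $3$ can be \emph{transferred} onto components $1$ and $2$ without changing the fitted values and without increasing the penalty, so that a minimizer with $\hat\bz_{3,\cdot}=\bzero$ always exists and, generically, is forced.

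First I would isolate the relevant subproblem. Writing $n_k=\|\ba^f_{\cdot,k}\|_2^2$ for the number of pixels in element $k$, conditions (i) and (ii) give $n_3=n_1+n_2$, while conditions (iii)--(iv), together with $\ba^f_{\cdot,3}=\ba^f_{\cdot,1}+\ba^f_{\cdot,2}$ (which forces $(\ba^f_{\cdot,3})^\top\ba^f_{\cdot,k}=0$ for $k\geq 4$), show that the pixel support of $\{1,2,3\}$ is disjoint from that of $\{4,\ldots,{K_f}\}$. Hence Lemma~\ref{lem:neurondecomp} applies with the block $\mathcal{N}=\{1,2,3\}$, and it suffices to analyze \eqref{eq:neuronreducesgl} restricted to these three components on the pixels $M=\mathcal{M}(\{1,2,3\})$.

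The crux is the algebraic identity, under the scaling of \eqref{eq:neuronsgl},
\begin{equation}
\tilde\ba^f_{\cdot,3}=\frac{n_1}{n_3}\,\tilde\ba^f_{\cdot,1}+\frac{n_2}{n_3}\,\tilde\ba^f_{\cdot,2},\nonumber
\end{equation}
whose weights are nonnegative and \emph{sum to one}; this holds because $\ba^f_{\cdot,1}$ and $\ba^f_{\cdot,2}$ have disjoint binary supports and each $\tilde\ba^f_{\cdot,k}$ is constant ($=1/n_k$) on its support. Given any feasible $(\bz_{1,\cdot},\bz_{2,\cdot},\bz_{3,\cdot})$ with $\bz_{k,\cdot}\geq\bzero$, I would define the transferred point $\bz_{1,\cdot}'=\bz_{1,\cdot}+\tfrac{n_1}{n_3}\bz_{3,\cdot}$, $\bz_{2,\cdot}'=\bz_{2,\cdot}+\tfrac{n_2}{n_3}\bz_{3,\cdot}$, and $\bz_{3,\cdot}'=\bzero$. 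Nonnegativity is preserved, and by the identity the fitted matrix $\tilde\bA^f\bZ$ is unchanged, so the squared-error term is identical. Because all vectors are nonnegative and the weights sum to one, the $\ell_1$ term is also unchanged: $\|\bz_{1,\cdot}'\|_1+\|\bz_{2,\cdot}'\|_1=\|\bz_{1,\cdot}\|_1+\|\bz_{2,\cdot}\|_1+\|\bz_{3,\cdot}\|_1$. For the group term, the triangle inequality gives $\|\bz_{1,\cdot}'\|_2\leq\|\bz_{1,\cdot}\|_2+\tfrac{n_1}{n_3}\|\bz_{3,\cdot}\|_2$ and likewise for component $2$; summing and again using $\tfrac{n_1}{n_3}+\tfrac{n_2}{n_3}=1$ shows the $\ell_2$ term does not increase. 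Thus the objective at the transferred point is no larger, and a minimizer with $\hat\bz_{3,\cdot}=\bzero$ exists.

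The main obstacle is upgrading this to the statement that \emph{every} minimizer has $\hat\bz_{3,\cdot}=\bzero$, which is exactly where $\alpha<1$ enters. Since the data term and the $\ell_1$ term are both invariant under the transfer, any minimizer with $\hat\bz_{3,\cdot}\neq\bzero$ must make both triangle inequalities tight; as the group weight $\lambda(1-\alpha)$ is strictly positive, tightness requires $\hat\bz_{1,\cdot}$ and $\hat\bz_{2,\cdot}$ each to be a nonnegative multiple of $\hat\bz_{3,\cdot}$. I would argue that this forces the optimal fitted traces on the two regions to be mutually proportional, a degenerate condition; away from it the transfer strictly lowers the group penalty, contradicting optimality and forcing $\hat\bz_{3,\cdot}=\bzero$, while in the degenerate case the transferred point is an equally optimal solution that still has $\hat\bz_{3,\cdot}=\bzero$. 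Handling this collinear boundary case carefully---rather than the transfer construction itself---is the delicate part of the argument.
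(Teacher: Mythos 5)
Your proposal is essentially the paper's own proof: the identical mass-transfer construction with weights $n_1/n_3$ and $n_2/n_3$, the same observation that the squared-error and $\ell_1$ terms are invariant because the weights are nonnegative and sum to one, and the same triangle-inequality comparison of the group penalty. (Your preliminary reduction via Lemma~\ref{lem:neurondecomp} is harmless but unnecessary; the transfer argument works directly in the full problem, which is how the paper proceeds.) The only divergence is at the very end, and it is to your credit. The paper runs the argument by contradiction and asserts that the triangle inequality is \emph{strict} whenever $\hat\bz_{3,\cdot}\neq\bzero$, concluding $\hat\bz_{3,\cdot}=\bzero$ outright; you correctly observe that strictness fails exactly when $\hat\bz_{1,\cdot}$ and $\hat\bz_{2,\cdot}$ are nonnegative multiples of $\hat\bz_{3,\cdot}$ (in particular when both are $\bzero$). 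In that collinear case the paper's strict inequality is unjustified, and the universal claim cannot be rescued: if the data are such that the optimal fitted traces on the two sub-regions are exactly proportional, then transferring mass in the reverse direction produces minimizers with $\hat\bz_{3,\cdot}\neq\bzero$, so only your conclusion --- a minimizer with $\hat\bz_{3,\cdot}=\bzero$ always exists, and generically every minimizer has this form --- is what actually holds. One shared wrinkle to flag: like the paper's proof, your key identity $\tilde\ba^f_{\cdot,3}=(n_1/n_3)\tilde\ba^f_{\cdot,1}+(n_2/n_3)\tilde\ba^f_{\cdot,2}$ relies on the scaling $\tilde\ba^f_{\cdot,k}=\ba^f_{\cdot,k}/\|\ba^f_{\cdot,k}\|_2^2$ from \eqref{eq:neuronsgl}; under the $\ell_2$ normalization literally written in the statement of Lemma~\ref{lem:doubleROI}, the weights preserving the fit would be $\sqrt{n_1/n_3}$ and $\sqrt{n_2/n_3}$, which sum to more than one and destroy the $\ell_1$ invariance, so the statement's normalization is evidently a typo that both you and the paper silently correct.
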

The proof of Lemma~\ref{lem:doubleROI} is in Appendix~\ref{app:doubleROI}. Note that Lemma~\ref{lem:doubleROI} assumes that the individual elements for the neighboring neurons, $\ba^f_{\cdot,1}$ and $\ba^f_{\cdot,2}$, do not overlap at all. The group lasso penalty can also be effective at zeroing out double neurons resulting from overlapping neurons, though this depends on the amount of overlap, among other factors.


\section{Discussion}
\label{sec:neurondisc}

We have presented SCALPEL, a method for simultaneously identifying neurons from calcium imaging data and estimating their intracellular calcium concentrations. SCALPEL takes a dictionary learning approach. We segment the frames of the calcium imaging video to construct a large preliminary dictionary of potential neurons, which is then refined through the use of clustering using a novel dissimilarity metric that leverages both spatial and temporal information. The calcium concentrations of the elements of the refined dictionary are then estimated by solving a sparse group lasso with a non-negativity constraint. 

Future work could consider alternative ways of deriving a preliminary dictionary in Step 1. Currently, we perform image segmentation via thresholding with multiple quantiles. This approach assumes that active neurons will have brightness, relative to their baseline fluorescence levels, that is within the range of our image segmentation threshold values. In practice, there is evidence that some neurons have comparatively lower fluorescence following spiking, which presents a challenge for optimal identification. Though SCALPEL performed well on the one-photon calcium imaging video we considered in Section~\ref{subsec:onephoton}, other one-photon videos may have more severe background effects. If this is the case, it may be desirable to incorporate more sophisticated modeling of the background noise, like that employed in \citet{zhou2016efficient}. Additionally, in future work, we could modify Step 3 of SCALPEL to make use of a more refined model for neuron spiking, as in \citet{friedrich2015fast, vogelstein2010fast}.

 Our SCALPEL proposal is implemented in the \verb=R= package \verb=scalpel=, which is available on \verb=CRAN=. A vignette illustrating how to use the package and code to reproduce all results presented in this paper are available at  \verb=ajpete.com/software=.

\section*{Acknowledgements}

We thank Ilana Witten and Malavika Murugan at the Princeton Neuroscience Institute for providing guidance and access to the calcium imaging data set analyzed in Section~\ref{subsec:onephoton}. We also thank Michael Buice at the Allen Institute for Brain Science for his helpful comments and providing access to the calcium imaging data sets analyzed in Section~\ref{subsec:twophoton}. We also thank Liam Paninski and Pengcheng Zhou for helpful suggestions and for providing software implementing CNMF-E.


\clearpage
\appendix
\section{Supplementary Material}
\label{sec:neuronapp}

\subsection{Data Pre-Processing}
\label{app:neuronpreprocess}

To begin, we perform three pre-processing steps on the raw data. These are briefly described in Step 0 of Section~\ref{sec:neuronmethod}. First, we smooth the raw $P\times T$ data matrix spatially and temporally using a Gaussian kernel smoother with a bandwidth of 1 pixel \citep{trevor2009elements}. Second, we adjust for any bleaching effect over time. Specifically, we fit a smoothing spline with 10 degrees of freedom to the median fluorescence for each frame over time, and subtract the frame-specific smoothed median from the corresponding frame. An example of a smoothing spline fit for the example calcium imaging video is shown in Figure~\ref{fig:bleaching}.

\begin{figure}
\begin{center}
\includegraphics[width=7cm]{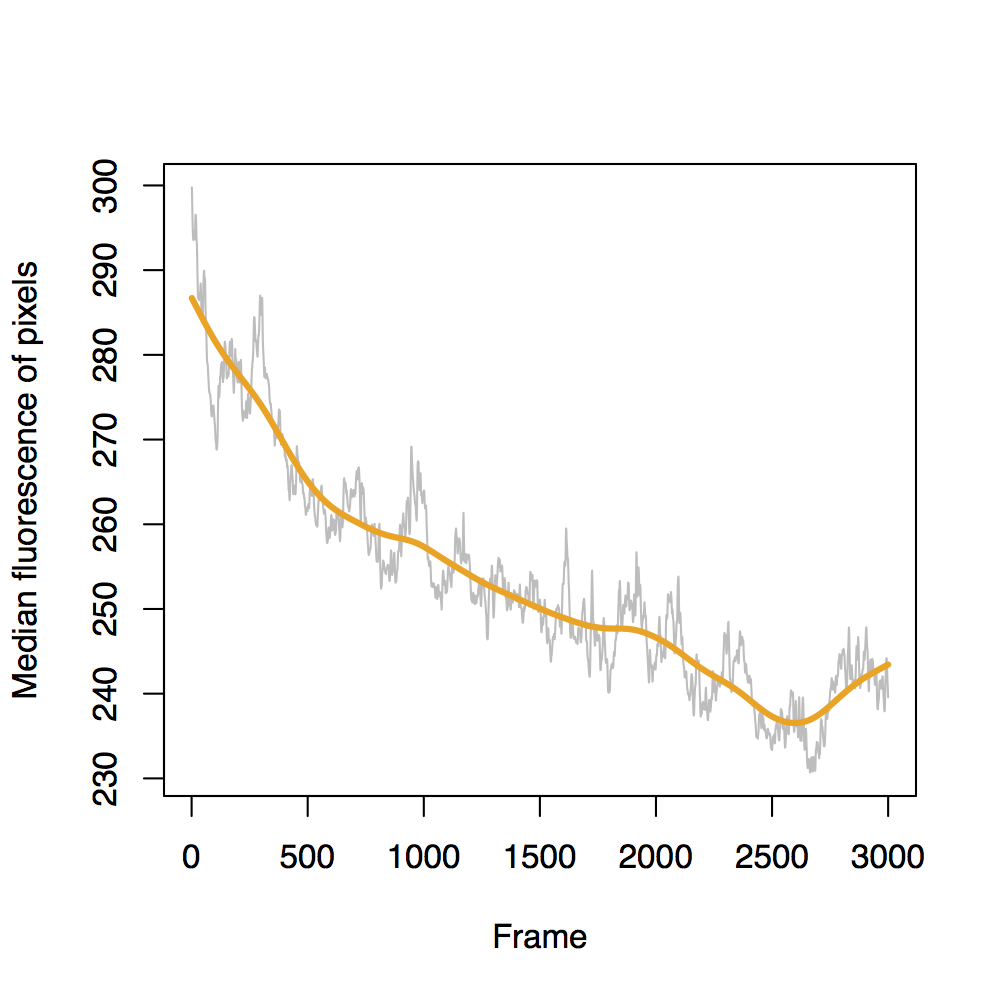}
\end{center}
\caption{For the example calcium imaging video, we plot the median fluorescence of the pixels within each frame, along with the smoothing spline fit (\protect\includegraphics[height=.13cm]{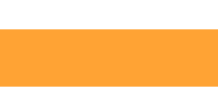}) that is used to correct for the bleaching effect.}
\label{fig:bleaching}
\end{figure}

Finally, we apply a slight variation of the often-used $\Delta f / f$ transformation \citep{grewe2010high, grienberger2012imaging, ahrens2013whole}. For the $i$th pixel in the $j$th frame, the standardized fluorescence is equal to 
 \begin{equation}
y_{i,j}\equiv \frac{y^0_{i,j} - \text{median}_{t=1,...,T}(y^0_{i,t})}{\text{median}_{t=1,...,T}(y^0_{i,t}) + \text{quantile}_{10\%}(\bY^0)},\nonumber
 \end{equation}
where $y^0_{i,j}$ is the fluorescence (after smoothing and bleaching correction) of the $i$th pixel in the $j$th frame.
This differs from the typical $\Delta f / f$ transformation in that (1) we standardize using the median across image frames instead of the mean; and (2) we add a small number to the denominator. This adjustment in the denominator prevents small fluctuations in the amount of fluorescence at pixels with very little overall fluorescence from resulting in extremely high standardized fluorescences. In Figure~\ref{fig:preprocess}, we show the resulting images after each stage of pre-processing for a sample frame.

\begin{figure}
\begin{center}
\includegraphics[width=12cm]{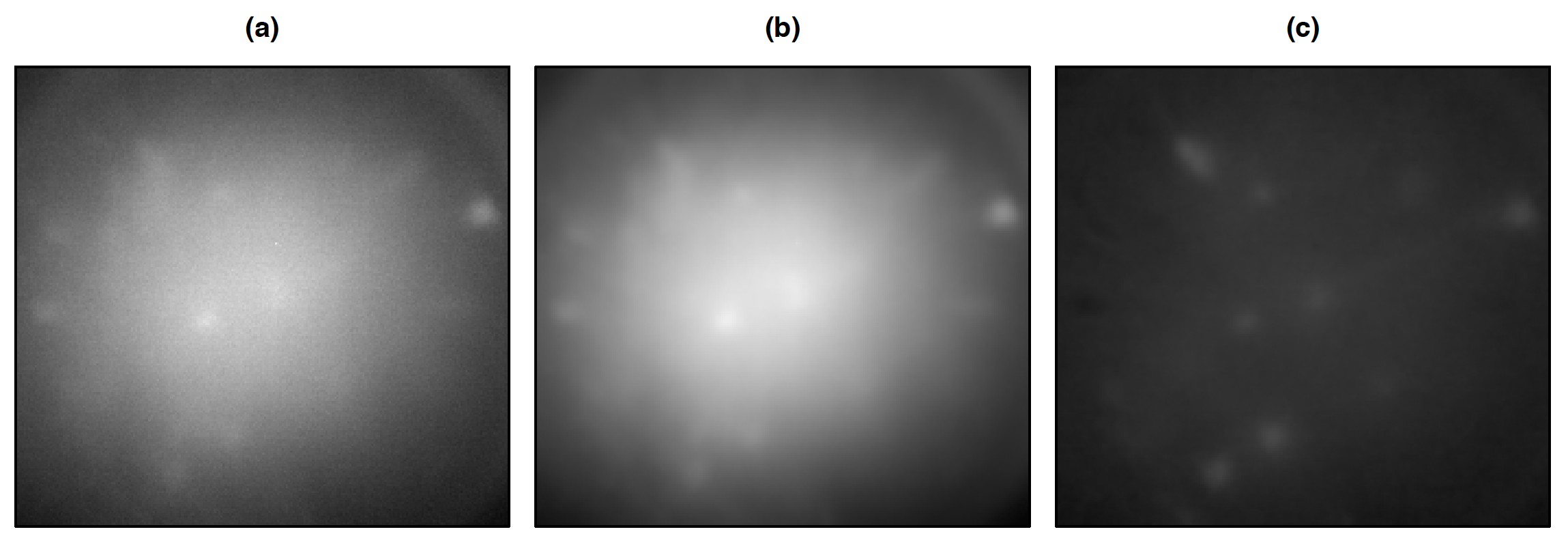}
\end{center}
\caption{In (a), we show a sample frame from the raw example calcium imaging video. In (b), we show the same frame after spatial and temporal smoothing has been done and the bleaching effect has been removed. In (c), we show the frame after the $\Delta f / f$ transformation has been performed, which is the final result of the pre-processing.}
\label{fig:preprocess}
\end{figure}

\subsection{Alternative Spatial Dissimilarity Metrics}
\label{app:neuronaltdis}

We consider three alternatives to the spatial dissimilarity proposed in \eqref{eq:neuronspatdis}: 
\begin{equation}
d^{union}_{i,j}=1-\frac{p_{i, j}}{p_{i,i}+p_{j,j}-p_{i,j}}, \text{ }d^{min}_{i,j}=1-\frac{p_{i, j}}{\min(p_{i,i}, p_{j,j})}, \text{ and }d^{max}_{i,j}=1-\frac{p_{i, j}}{\max(p_{i,i}, p_{j,j})}.
\label{eq:neuronaltdis}
\end{equation}
We refer to the dissimilarities in \eqref{eq:neuronaltdis} as the union, min, and max dissimilarities, respectively. Note that all of the dissimilarities in \eqref{eq:neuronspatdis} and \eqref{eq:neuronaltdis} take on values in $[0,1]$. 

To see why we prefer the dissimilarity measure in \eqref{eq:neuronspatdis} over those in \eqref{eq:neuronaltdis}, consider Figure~\ref{fig:spatialdis}(d), in which one component is located entirely within the other, and the two are of substantially different sizes. Min dissimilarity assigns a value of 0 in this case, despite the fact that one component is much smaller than the other. Union dissimilarity and max dissimilarity yield a very large value of 0.62, despite the fact that there is substantial overlap between the two components. In contrast, the dissimilarity in \eqref{eq:neuronspatdis} yields a modest but non-zero dissimilarity, which seems reasonable given that the two components in Figure~\ref{fig:spatialdis}(d) are similar but not identical. More generally, the dissimilarity in \eqref{eq:neuronspatdis} tends to yield reasonable values across a range of settings (Figures~\ref{fig:spatialdis}(a)-(c)). 

Often Euclidean distance, $\| \ba^0_{\cdot,i}-\ba^0_{\cdot,j}\|_2$, is used as the distance metric in clustering. Here, the squared Euclidean distance equals the number of differing pixels between components. This is problematic as a distance metric for our purposes --- it treats $a^0_{n,i}=a^0_{n,j}=0$ the same as $a^0_{n,i}=a^0_{n,j}=1$, though the latter is much more informative in terms of the components' similarity. For example, we see that the pairs of components in Figures~\ref{fig:spatialdis}(b) and (c) have similar Euclidean distances, though the components in Figure~\ref{fig:spatialdis}(b) are non-overlapping and could be arbitrarily far apart without affecting the Euclidean distance.
 
\begin{figure}
\begin{center}
\includegraphics[width=\textwidth]{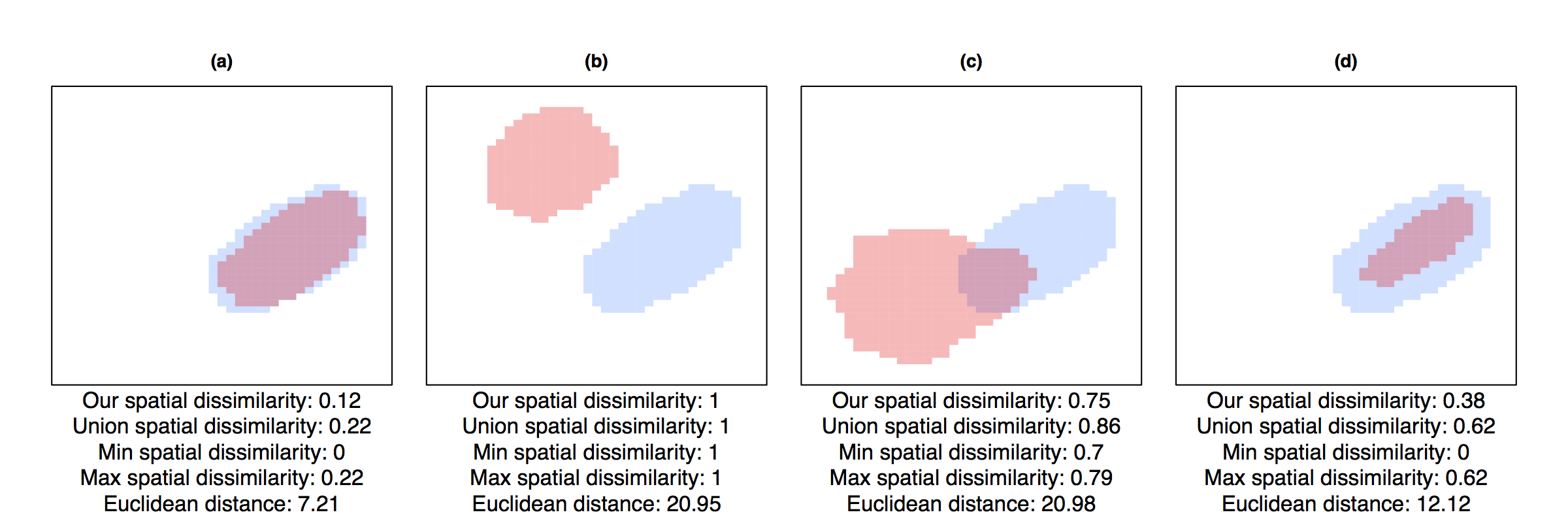}
\end{center}
\caption{We report our spatial dissimilarity metric, as defined in \eqref{eq:neuronspatdis}, and three alternative metrics, defined in \eqref{eq:neuronaltdis}, for four pairs of preliminary dictionary elements. We also report the Euclidean distance between each pair of elements.}
\label{fig:spatialdis}
\end{figure}

\subsection{Example of a Cluster in Step 2}
\label{app:neuroncluster}

To see how the preliminary dictionary element in a cluster relates to the other preliminary dictionary elements assigned to that cluster, we give an example in Figure~\ref{fig:cluster}.

\begin{figure}
\begin{center}
\includegraphics[width=\textwidth]{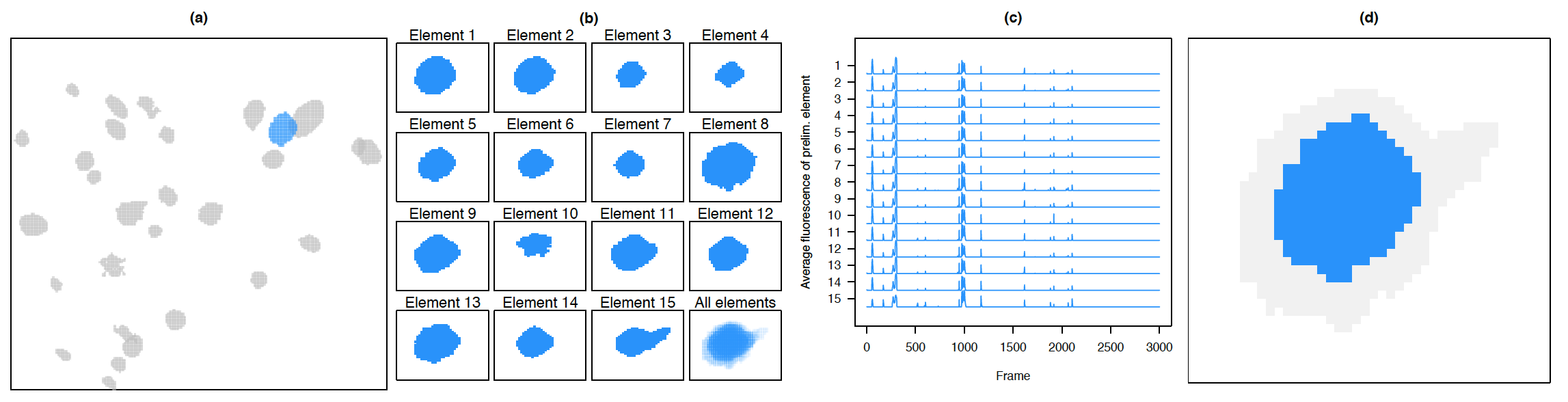}
\end{center}
\caption{We focus on a single cluster of preliminary dictionary elements from Step 2. The representative dictionary element for this cluster is highlighted in (a). The spatial maps for a random subset of the 36 preliminary dictionary elements in this cluster are shown in (b), along with a spatial map containing all 36 dictionary elements, in which the hue intensity at a given pixel indicates the number of elements containing that pixel. In (c), we plot the average thresholded fluorescence for each of the dictionary elements from (b). Finally, in (d), we show the representative element for that cluster. The gray coloring indicates the union of all preliminary dictionary elements in the cluster.}
\label{fig:cluster}
\end{figure}

\subsection{Selecting $\lambda$ for \eqref{eq:neuronsgl}}
\label{app:neuroncrossval}
To choose $\lambda$ for \eqref{eq:neuronsgl} via a validation set approach, we perform the following steps:
\begin{enumerate}
\item Obtain $\tilde\bA^f\in\mathbb{R}^{P\times {K_f}}$ by dividing the $k$th column of $\bA^f$ by $\| \ba^f_{\cdot, k}\|_2^2$.
\item Construct a training set $\mathcal{T}$ by sampling 60\% of the pixels in each overlapping group of neurons. That is, we sample 60\% of the elements in $\mathcal{M}(\mathcal{N}_1),\mathcal{M}(\mathcal{N}_2),\ldots,\mathcal{M}(\mathcal{N}_S)$. Assign the remaining pixels to the validation set, $\mathcal{V}=\{ v \in (1,\ldots,P):v\notin \mathcal{T}\}$.
\item Using Algorithm~\ref{alg:neuronfastalg}, solve \eqref{eq:neuronsgl} on the training set of pixels for a decreasing sequence of 20 $\lambda$ values, $\lambda_1,\ldots,\lambda_{20}$:
\begin{equation}
\displaystyle \hat\bZ(\lambda_i)=\underset{\bZ\in\mathbb{R}^{{K_f}\times T}, \bZ \geq 0}{\mathrm{argmin}} \quad  \frac{1}{2}  \left \| \bY_{\mathcal{T},\cdot} - \tilde\bA^f_{\mathcal{T},\cdot}\bZ\right\|_F ^2 + \lambda_i\alpha \sum_{k=1}^{{K_f}} \left\| \bz_{k,\cdot}\right\|_1+\lambda_i(1-\alpha) \sum_{k=1}^{{K_f}} \left\| \bz_{k,\cdot}\right\|_2.\nonumber
\end{equation}
\item For each $\lambda_i$, calculate the validation error,
\begin{equation}
err_{\mathcal{V}}(\lambda_{i})=\frac{1}{\vert \mathcal{V} \vert}\left\| \bY_{\mathcal{V},\cdot}^B-\tilde\bA^f_{\mathcal{V},\cdot}\hat\bZ(\lambda_i)\right\|_F^2,\nonumber
\end{equation}
where $\left[\bY_{\mathcal{V},\cdot}^B\right]_{j,k}=\begin{cases}\left[\bY_{\mathcal{V},\cdot}\right]_{j,k}&\text{if }\left[\bY_{\mathcal{V},\cdot}\right]_{j,k}>-\text{quantile}_{0.1\%}(\bY)\\0&\text{otherwise}\end{cases}$. We use a thresholded version of $\bY$ when calculating the validation error, as we only wish to have small reconstruction error on the brightest parts of the video.
Select the optimal value of $\lambda$ as
\begin{equation}
\lambda^*=\underset{\lambda_i}{\mathrm{argmax}} \left\{ \lambda_i : \frac{err_{\mathcal{V}}(\lambda_{i}) - \min_{\lambda_j} err_{\mathcal{V}}(\lambda_{j})}{ \min_{\lambda_j} err_{\mathcal{V}}(\lambda_{j})} \leq 0.05 \right\};\nonumber
\end{equation}
this is the largest value of $\lambda$ that results in a validation error within 5\% of the minimum validation error achieved by any value of $\lambda$ considered.
\item Solve \eqref{eq:neuronsgl} on all pixels:
\begin{equation}
\displaystyle \underset{\bZ\in\mathbb{R}^{{K_f}\times T}, \bZ \geq 0}{\mathrm{minimize}} \quad  \frac{1}{2}  \left \| \bY - \tilde\bA^f\bZ\right\|_F ^2 + \frac{\lambda^*}{\vert \mathcal{T}\vert/P}\alpha \sum_{k=1}^{{K_f}} \left\| \bz_{k,\cdot}\right\|_1+\frac{\lambda^*}{\vert \mathcal{T}\vert/P}(1-\alpha) \sum_{k=1}^{{K_f}} \left\| \bz_{k,\cdot}\right\|_2,\nonumber
\end{equation}
where we have scaled the tuning parameter by the percent of pixels in the training set, in order to account for the fact that the sum of squared errors in the loss function is not scaled by the number of pixels.
\end{enumerate}

This process can be done separately for each group of overlapping neurons $\mathcal{N}_1,\ldots,\mathcal{N}_S$ in order to select a different value of $\lambda$ for each group, or for all groups at once to select a single value of $\lambda$. By following steps similar to those described above, $\lambda$ can alternatively be selected via cross-validation.  

\subsection{Filtering Dictionary Elements Prior to Fitting the Sparse Group Lasso}
\label{app:minclusters}

Optionally, at the beginning of Step 3, the elements in the refined dictionary from Step 2 may be filtered prior to fitting the sparse group lasso in  \eqref{eq:neuronsgl}. One way to filter the dictionary elements is on the basis of the number of members in the clusters. That is, we can discard any dictionary elements representing clusters containing fewer than some minimum number of members. This  number  should be chosen based on the goals of the analysis.  If we retain all clusters, regardless of size, then we may include some non-neuronal dictionary elements in the sparse group lasso problem. In contrast, if we discard small clusters (e.g., those with fewer than 5 members), then we may erroneously filter out some true neurons. 

As an alternative to filtering dictionary elements on the basis of cluster size, we can instead manually inspect each dictionary element, by examining the frames of the video from which it was derived. In Figure~\ref{fig:tradeoff}, we compare the results obtained by manually filtering each dictionary element, versus simply filtering dictionary elements on the basis of cluster size, on the example video considered in Sections~\ref{sec:neuronmethod} and~\ref{subsec:onephoton}. We find that the two approaches give similar results.

\begin{figure}
\begin{center}
\includegraphics[width=2.5in]{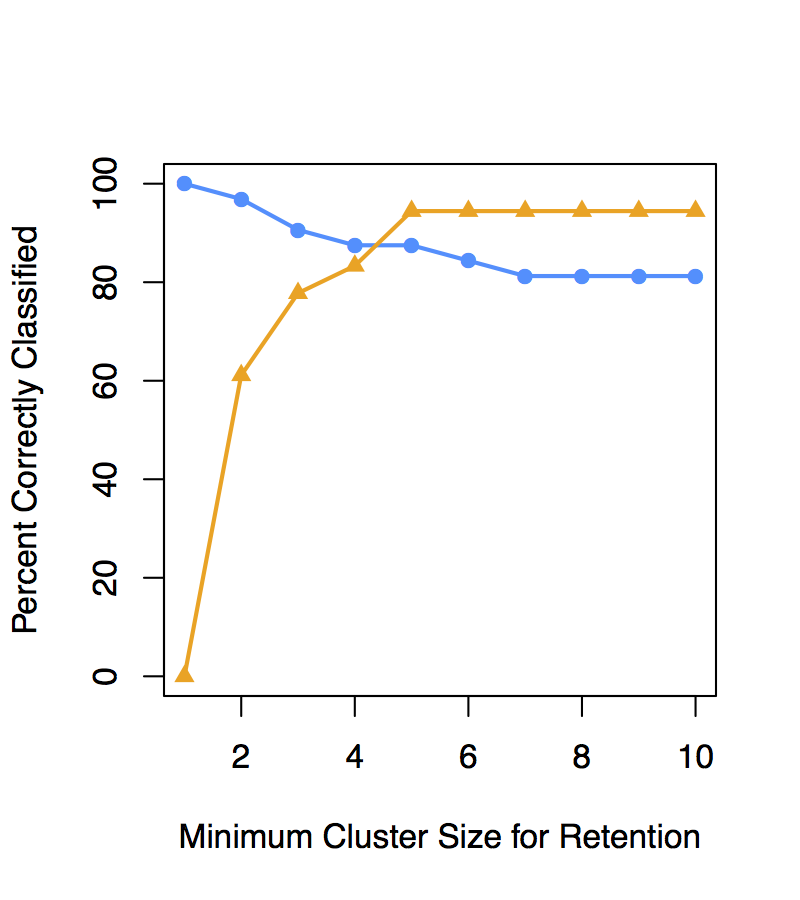}
\end{center}
\caption{On the example video considered in Sections~\ref{sec:neuronmethod} and~\ref{subsec:onephoton}, we manually inspected each dictionary element resulting from Step 2 of SCALPEL, by examining the frames from which each dictionary element was derived. Based on this manual inspection, we classified each of the 50 dictionary elements as a ``neuron" or a ``non-neuron". Next, we considered whether simply filtering each dictionary element based on the number of elements in its cluster (as described at the beginning of Step 3 of SCALPEL) would accurately distinguish between ``neurons" and ``non-neurons". In the figure, the y-axis shows the percentage of ``neurons" that would remain after filtering (\protect\includegraphics[height=.23cm]{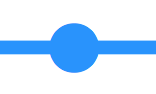}), and the percentage of ``non-neurons" that would be eliminated via filtering (\protect\includegraphics[height=.23cm]{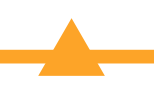}), as a function of the filtering threshold (shown on the x-axis). We find that in this video, a careful manual analysis of each dictionary element yields very similar results to simply filtering each dictionary element based on the number of elements in its cluster.}
\label{fig:tradeoff}
\end{figure}

\subsection{Proof of Lemma~\ref{lem:neurononecomp}}
\label{app:neurononecomp}

We first prove a result that we will use later.

\begin{lemma}
\label{lem:nonnegsoln}
The solution to $\underset{\bbeta\geq\bzero}{\mathrm{minimize}} \quad\frac{1}{2}\left\|\by-\bbeta\right\|_2^2+\lambda\left\|\bbeta\right\|_2$ is $\hat\bbeta=\left( 1-\frac{\lambda}{\left\| \left(\by \right)_+\right\|_2}\right)_+\left( \by\right)_+$.
\end{lemma}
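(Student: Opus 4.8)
The plan is to exploit convexity. The objective $\frac{1}{2}\|\by-\bbeta\|_2^2+\lambda\|\bbeta\|_2$ is strictly convex in $\bbeta$ (the quadratic term is strictly convex, and adding the convex penalty preserves this) over the convex feasible set $\{\bbeta\geq\bzero\}$, so there is a unique minimizer $\hat\bbeta$ characterized by the first-order optimality conditions. Rather than attack the KKT system head-on, I would first decouple the non-negativity constraint from the non-smooth group penalty by a monotonicity argument, and only then invoke a clean group soft-thresholding identity.

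First I would show that $\hat\beta_i=0$ for every coordinate with $y_i\leq 0$. Holding the other coordinates fixed and viewing the objective as a function of $\beta_i\in[0,\infty)$: the data-fit term $\frac{1}{2}(y_i-\beta_i)^2$ has derivative $\beta_i-y_i\geq 0$ when $y_i\leq 0$, so it is non-decreasing on $[0,\infty)$, and the penalty $\lambda\|\bbeta\|_2$ is also non-decreasing in $\beta_i\geq 0$. Hence decreasing $\beta_i$ to $0$ weakly decreases the objective, forcing $\hat\beta_i=0$. This lets me discard those coordinates and effectively replace $\by$ by $(\by)_+$.

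Restricting to $S=\{i:y_i>0\}$ (the zeroed coordinates contribute only the constant $\frac{1}{2}\sum_{i\notin S}y_i^2$), the problem reduces to $\min_{\bbeta_S\geq\bzero}\frac{1}{2}\|\by_S-\bbeta_S\|_2^2+\lambda\|\bbeta_S\|_2$. The key step is to argue the non-negativity constraint is now \emph{inactive}: the unconstrained proximal (group soft-thresholding) solution is $(1-\lambda/\|\by_S\|_2)_+\,\by_S$, which is a non-negative scalar times the strictly positive vector $\by_S$, hence already feasible; by convexity the constrained and unconstrained minimizers coincide. I expect this to be the main obstacle, since it is where the constraint and the non-smooth penalty interact — but it works precisely because group soft-thresholding rescales the whole vector by a single non-negative factor rather than shifting coordinates individually, so it can never push a positive coordinate negative.

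Finally I would supply the group soft-thresholding fact used above: the minimizer of $\frac{1}{2}\|\bu-\bbeta\|_2^2+\lambda\|\bbeta\|_2$ is $(1-\lambda/\|\bu\|_2)_+\,\bu$. This follows from the subgradient condition $\bzero\in\bbeta-\bu+\lambda\,\partial\|\bbeta\|_2$: when $\|\bu\|_2\leq\lambda$ the point $\bbeta=\bzero$ is optimal because $\bu/\lambda$ lies in the unit ball, which is $\partial\|\bzero\|_2$; otherwise any optimal $\bbeta\neq\bzero$ must be a positive multiple of $\bu$, and solving for the scalar yields $1-\lambda/\|\bu\|_2$. Combining the three steps, and noting that $\by_S=(\by)_+$ on its support with $\|\by_S\|_2=\|(\by)_+\|_2$, gives $\hat\bbeta=(1-\lambda/\|(\by)_+\|_2)_+\,(\by)_+$, under the convention that this equals $\bzero$ in the degenerate case $(\by)_+=\bzero$.
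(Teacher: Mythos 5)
Your proposal is correct and follows essentially the same route as the paper's proof: first argue that coordinates with $y_i \leq 0$ must be zero at the optimum, then reduce to the positive coordinates and observe that the unconstrained group soft-thresholding solution is a non-negative multiple of $\by_S$, hence feasible, so the constraint is inactive. The only difference is cosmetic: the paper cites \citet{simon2013sparse} for the group soft-thresholding identity, whereas you derive it directly from the subgradient optimality condition, making the argument self-contained.
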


\begin{proof}
Let $\hat\bbeta=\underset{\bbeta\geq\bzero}{\mathrm{argmin}}\;\frac{1}{2}\left\|\by-\bbeta\right\|_2^2+\lambda\left\|\bbeta\right\|_2$ and $\mathcal{C}=\{ i: y_i\geq 0\}$. First, we show $\hat\bbeta_{-\mathcal{C}}=\bzero$. In anticipation of contradiction, assume there exists $j$ such that $j\notin \mathcal{C}$ and $\hat\beta_j>0$. Define $\tilde\bbeta$ as $\tilde\beta_i=\begin{cases}\hat\beta_i &\text{if }i\neq j\\0&\text{if }i=j \end{cases}.$ Then
\begin{equation}
\frac{1}{2}\left\|\by-\tilde\bbeta\right\|_2^2+\lambda\left\|\tilde\bbeta\right\|_2 < \frac{1}{2}\left\|\by-\hat\bbeta\right\|_2^2+\lambda\left\|\hat\bbeta\right\|_2.\nonumber
\end{equation}
This is a contradiction, so we conclude that $\hat\beta_i=0$ for all $i\notin\mathcal{C}$. It remains to solve
\begin{equation}
\label{eq:neuronpos}
\underset{\bbeta_{\mathcal{C}}\geq\bzero}{\mathrm{minimize}} \quad\frac{1}{2}\left\|\by_{\mathcal{C}}-\bbeta_{\mathcal{C}}\right\|_2^2+\lambda\left\|\bbeta_{\mathcal{C}}\right\|_2.
\end{equation}
By a result in Section 3.1 of \citet{simon2013sparse}, the solution to \eqref{eq:neuronpos} without the non-negativity constraint on $\bbeta_{\mathcal{C}}$ is $\left( 1-\frac{\lambda}{\left\| \by_{\mathcal{C}}\right\|_2}\right)_+\by_{\mathcal{C}}$, which has all non-negative elements. Therefore, it is also the solution to \eqref{eq:neuronpos}.
\end{proof}

We now proceed to prove Lemma~\ref{lem:neurononecomp}.

\begin{proof}
Our goal is to solve
\begin{equation}
\label{eq:neurononecompapp}
\displaystyle \underset{\bz\in\mathbb{R}^{T},\bz\geq \bzero}{\mathrm{minimize}} \quad  \frac{1}{2}  \left \| \bY - \tilde\ba^f\bz^\top\right\|_F ^2 + \lambda\alpha\bone^\top \bz+\lambda(1-\alpha)  \left\| \bz\right\|_2.
\end{equation}
Note that solving \eqref{eq:neurononecompapp} is equivalent to solving \eqref{eq:neurononecomp}, as $\|\bz\|_1=\bone^\top\bz$ when $\bz\geq\bzero$. By algebraic manipulation, we can show that
\begin{equation}
 \left\| \bY - \tilde\ba^f\bz^\top\right\|_F^2+ \lambda\alpha\bone^\top\bz=  \left\| \frac{\bY^\top\tilde\ba^f- \lambda\alpha\bone}{\sqrt{(\tilde\ba^f)^\top\tilde\ba^f}} - \sqrt{(\tilde\ba^f)^\top\tilde\ba^f}\bz\right\|_2^2+C,\nonumber
 \end{equation}
where $C$ is a constant that does not depend on $\bz$. Therefore, the solution to \eqref{eq:neurononecompapp} is the same as the solution to
\begin{equation}
\label{eq:neuronsoln}
\displaystyle \underset{\bz\geq \bzero}{\mathrm{minimize}} \quad  \frac{1}{2}  \left\| \frac{\bY^\top\tilde\ba^f - \lambda\alpha\bone}{(\tilde\ba^f)^\top\tilde\ba^f} -\bz\right\|_2^2+\frac{\lambda(1-\alpha)}{(\tilde\ba^f)^\top\tilde\ba^f}  \left\| \bz\right\|_2.
\end{equation}
We solve \eqref{eq:neuronsoln} by applying Lemma~\ref{lem:nonnegsoln}.
\end{proof}

\subsection{Proof of Lemma~\ref{lem:neurondecomp}}
\label{app:neurondecomp}

\begin{proof}
The result follows simply from observing that
\begin{align}
\left\|\bY - \tilde{\bA^f} \bZ \right\|_F^2 &= \sum_{s=1}^S \left\|\bY_{\mathcal{M}(\mathcal{N}_s),\cdot} - \tilde{\bA^f}_{\mathcal{M}(\mathcal{N}_s),\cdot} \bZ \right\|_F^2 \nonumber\\
&= \sum_{s=1}^S \left\|\bY_{\mathcal{M}(\mathcal{N}_s),\cdot} - \sum_{s'=1}^S \tilde{\bA^f}_{\mathcal{M}(\mathcal{N}_s),\mathcal{N}_{s'}} \bZ_{\mathcal{N}_{s'}, \cdot} \right\|_F^2\nonumber\\
&=    \sum_{s=1}^S \left\|\bY_{\mathcal{M}(\mathcal{N}_s),\cdot} - \tilde{\bA^f}_{\mathcal{M}(\mathcal{N}_s),\mathcal{N}_s} \bZ_{\mathcal{N}_s, \cdot} \right\|_F^2.\nonumber
\end{align}
The last equality follows from the condition of the lemma, which guarantees that $\tilde{\bA^f}_{\mathcal{M}(\mathcal{N}_s),\mathcal{N}_{s'}}=0$ for all $s \neq s'$.
\end{proof}

\subsection{Details of Step 2(b) of Algorithm~\ref{alg:neuronfastalg}}
\label{app:neuronappggd}
Note that minimizing the objective in \eqref{eq:neuronsgllittle} subject to $\bZ_{\mathcal{N}_{s}, \cdot}\geq\bzero$ is equivalent to minimizing
\begin{equation}
\label{eq:withnolasso}
\displaystyle \quad  \frac{1}{2}  \left\| \bY_{\mathcal{M}(\mathcal{N}_s), \cdot} - \tilde\bA^f_{\mathcal{M}(\mathcal{N}_s),\mathcal{N}_{s}}\bZ_{\mathcal{N}_{s}, \cdot}\right\|_F^2+ \lambda\alpha  \bone^\top\bZ_{\mathcal{N}_s, \cdot}\bone+\lambda(1-\alpha) \sum_{n\in\mathcal{N}_s} \left\| \bz_{n,\cdot}\right\|_2
\end{equation}
subject to $\bZ_{\mathcal{N}_{s}, \cdot}\geq\bzero$, since $ \bone^\top\bZ_{\mathcal{N}_s, \cdot}\bone=\sum_{n\in\mathcal{N}_s} \left\| \bz_{n,\cdot}\right\|_1$ when $\bZ_{\mathcal{N}_{s}, \cdot}\geq\bzero$. 

Let $f\left(\bZ_{\mathcal{N}_s, \cdot}\right)= \frac{1}{2}  \left \| \bY_{\mathcal{M}(\mathcal{N}_s),\cdot} - \tilde\bA^f_{\mathcal{M}(\mathcal{N}_s),\mathcal{N}_s}\bZ_{\mathcal{N}_s, \cdot}\right\|_F ^2 + \lambda\alpha \bone^\top\bZ_{\mathcal{N}_s, \cdot}\bone$, which is the differentiable part of \eqref{eq:withnolasso}, and let $g\left(\bZ_{\mathcal{N}_s, \cdot}\right)=\lambda(1-\alpha)\sum_{n\in\mathcal{N}_s}\left\| \bz_{n,\cdot}\right\|_2$, the non-differentiable part.

Generalized gradient descent \citep{beck2009fast, parikh2014proximal} is a majorization-minimization scheme. First, we find a quadratic approximation to $f\left(\bZ_{\mathcal{N}_s, \cdot}\right)$ centered at our previous estimate for $\bZ_{\mathcal{N}_s, \cdot}$, $\bZ_{\mathcal{N}_s, \cdot}^0$, that majorizes $f\left(\bZ_{\mathcal{N}_s, \cdot}\right)$. That is,
\begin{equation}
f\left(\bZ_{\mathcal{N}_s, \cdot}\right)\leq f\left(\bZ_{\mathcal{N}_s ,\cdot}^0\right) + \Tr\left[\left(\bZ_{\mathcal{N}_s, \cdot} - \bZ_{\mathcal{N}_s, \cdot}^0\right)^\top\nabla f\left(\bZ_{\mathcal{N}_s, \cdot}^0\right)\right]+\frac{1}{2t}\left\| \bZ_{\mathcal{N}_s, \cdot} - \bZ_{\mathcal{N}_s, \cdot}^0\right\|_F^2,\nonumber
\end{equation}
where $t$ is the step size such that $\nabla^2 f(\cdot)\preceq \frac{1}{t}\bI$. After completing the square, we can see that minimizing the quadratic approximation to $f\left(\bZ_{\mathcal{N}_s, \cdot}\right)$ gives the same solution as solving
\begin{equation}
\underset{\bZ_{\mathcal{N}_s, \cdot}}{\mathrm{minimize}} \quad\frac{1}{2t}\left\| \bZ_{\mathcal{N}_s, \cdot} - \left( \bZ_{\mathcal{N}_s, \cdot}^0- t\nabla f\left(\bZ_{\mathcal{N}_s, \cdot}^0\right)\right) \right\|_F^2.\nonumber
\end{equation}
Thus we perform this minimization with $g\left(\bZ_{\mathcal{N}_s, \cdot}\right)$ added to the objective function, which gives the proximal problem
\begin{equation}
\label{eq:neuronprox}
\underset{\bZ_{\mathcal{N}_s ,\cdot}\geq\bzero}{\mathrm{minimize}} \quad\frac{1}{2}\left\| \bZ_{\mathcal{N}_s, \cdot} -\tilde\bY_{\mathcal{N}_s,\cdot} \right\|_F^2 + \lambda(1-\alpha)t\sum_{n\in\mathcal{N}_s} \left\| \bz_{n,\cdot}\right\|_2,
\end{equation}
where $\tilde\bY_{\mathcal{N}_s,\cdot}=\bZ_{\mathcal{N}_s, \cdot}^0- t\left( -(\tilde\bA^f_{\mathcal{M}(\mathcal{N}_s),\mathcal{N}_s})^\top\left(\bY_{\mathcal{M}(\mathcal{N}_s),\cdot}-\tilde\bA^f_{\mathcal{M}(\mathcal{N}_s),\mathcal{N}_s}\bZ_{\mathcal{N}_s, \cdot}^0\right) + \lambda\alpha\bone\bone^\top\right)$. The minimization in \eqref{eq:neuronprox} is separable in $\bz_{n,\cdot}$, so for $n\in\mathcal{N}_s$, we solve
\begin{equation}
\label{eq:neuronproxsep}
\underset{\bz_{n,\cdot}\geq\bzero}{\mathrm{minimize}} \quad\frac{1}{2}\left\| \bz_{n,\cdot} -\tilde\by_{n,\cdot} \right\|_2^2 + \lambda(1-\alpha)t\left\| \bz_{n,\cdot}\right\|_2.
\end{equation}
By Lemma~\ref{lem:nonnegsoln} in Appendix~\ref{app:neurononecomp}, the solution to \eqref{eq:neuronproxsep} is $\hat\bz_{n,\cdot}=\left( 1-\frac{\lambda(1-\alpha)t}{\left\| \left(\tilde\by_{n,\cdot}\right)_+\right\|_2}\right)_+\left(\tilde\by_{n,\cdot}\right)_+$.

It only remains to derive a suitable step size $t$ so that $\nabla^2 f(\cdot)=(\tilde\bA^f_{\mathcal{M}(\mathcal{N}_s),\mathcal{N}_s})^\top\tilde\bA^f_{\mathcal{M}(\mathcal{N}_s),\mathcal{N}_s}\preceq \frac{1}{t}\bI$. A sufficient condition for $\frac{1}{t}\bI-(\tilde\bA^f_{\mathcal{M}(\mathcal{N}_s),\mathcal{N}_s})^\top\tilde\bA^f_{\mathcal{M}(\mathcal{N}_s),\mathcal{N}_s}$ to be positive semi-definite is that $\frac{1}{t}\bI-(\tilde\bA^f_{\mathcal{M}(\mathcal{N}_s),\mathcal{N}_s})^\top\tilde\bA^f_{\mathcal{M}(\mathcal{N}_s),\mathcal{N}_s}$ be diagonally dominant. That is,
\begin{equation}
\frac{1}{t} -(\tilde\ba^f_{\mathcal{M}(\mathcal{N}_s),n})^\top\tilde\ba^f_{\mathcal{M}(\mathcal{N}_s),n}\geq \sum_{j\in\mathcal{N}_s,j\neq n} (\tilde\ba^f_{\mathcal{M}(\mathcal{N}_s),j})^\top\tilde\ba^f_{\mathcal{M}(\mathcal{N}_s),n}\nonumber
\end{equation}
for all $n\in\mathcal{N}_s$.
Thus we choose $t=(\max_{n\in\mathcal{N}_s} \sum_{j\in\mathcal{N}_s} (\tilde\ba^f_{\mathcal{M}(\mathcal{N}_s),j})^\top\tilde\ba^f_{\mathcal{M}(\mathcal{N}_s),n})^{-1}$.

\subsection{Proof of Lemma~\ref{lem:Ascale}}
\label{app:neuronAscale}

\begin{proof}
Since $\ba^f_{\cdot ,1}$ does not overlap any other spatial components (i.e., $(\ba^f_{\cdot, 1})^\top\ba^f_{\cdot, k}=0$ for $k=2,\ldots,{K_f}$), we know by the results in Lemmas~\ref{lem:neurononecomp} and \ref{lem:neurondecomp} that
\begin{equation}
\hat\bz_{1, \cdot} = \left(1 - \frac{\lambda(1-\alpha)}{\| ( \bY^\top\tilde\ba^f_{\cdot, 1}-\lambda\alpha\bone  )_+\|_2} \right)_+  \left( \frac{\bY^\top \tilde\ba^f_{\cdot, 1}-\lambda\alpha\bone }{(\tilde\ba^f_{\cdot, 1})^\top \tilde\ba^f_{\cdot, 1}}\right)_+.\nonumber
\end{equation}
Note that $\bY^\top\tilde\ba^f_{\cdot, 1}=(\bZ^*)^\top(\bA^f)^\top\ba^f_{\cdot, 1}/\|\ba^f_{\cdot, 1}\|_2^2$, so $\bY^\top\tilde\ba^f_{\cdot, 1}=\bz_{1, \cdot}^*$ since $(\ba^f_{\cdot, 1})^\top\ba^f_{\cdot, k}=0$ for $k=2,\ldots,{K_f}$. Thus $\hat\bz_{1, \cdot} =\bzero$ if and only if $\lambda(1-\alpha)\geq \| (\bz_{1, \cdot}^* - \lambda\alpha\bone)_+\|_2$. Similarly, $\hat\bz_{2, \cdot} =\bzero$ if and only if $\lambda(1-\alpha)\geq \| (\bz_{2, \cdot}^* - \lambda\alpha\bone)_+\|_2$. Thus, it remains to show that $\| (\bz_{1, \cdot}^* - \lambda\alpha\bone)_+\|_2=\| (\bz_{2, \cdot}^* - \lambda\alpha\bone)_+\|_2$. Using the properties of permutation matrices that $\bone=\bP\bone$ and $\bP^\top\bP=\bI$, we see 
\begin{align}
\| (\bz_{1, \cdot}^* - \lambda\alpha\bone)_+\|_2 &=\| (\bP\bz_{2, \cdot}^* - \lambda\alpha\bP\bone)_+\|_2 \nonumber\\
&=\| \bP(\bz_{2, \cdot}^*- \lambda\alpha\bone)_+\|_2 \nonumber\\
&=\| (\bz_{2, \cdot}^*- \lambda\alpha\bone)_+\|_2 \nonumber,
\end{align}
and therefore, $\hat\bz_{1, \cdot} =\bzero$ if and only if $\hat\bz_{2, \cdot} =\bzero$ .
\end{proof}

\subsection{Proof of Lemma \ref{lem:maxlam}}
\label{app:neuronmaxlam}

\begin{proof}
Recall that solving  \eqref{eq:neuronsgl} gives the same solution as solving \eqref{eq:neuronreducesgl}. Thus we focus on deriving a condition on $\lambda$ that guarantees that $\hat\bZ_{\mathcal{N}_s,\cdot}$, the solution to \eqref{eq:neuronreducesgl}, equals zero for $s=1,\ldots,S$.
If $\vert \mathcal{N}_s\vert =1$, we see from \eqref{eq:neuronsinglezsolnalg} that $\hat\bz_{\mathcal{N}_s,\cdot}=\bzero$ if and only if 
\begin{equation}
\label{eq:neuronconditionsingle}
\lambda(1-\alpha)\geq \left\|\left( (\bY_{\mathcal{M}(\mathcal{N}_s),\cdot})^\top\tilde\ba^f_{\mathcal{M}(\mathcal{N}_s),\mathcal{N}_s} -\lambda\alpha\bone \right)_+\right\|_2.
\end{equation}
Recall that if $\vert\mathcal{N}_s\vert>1$, we iteratively solve for $\hat\bZ_{\mathcal{N}_s,\cdot}$ using Step 2(b) of Algorithm~\ref{alg:neuronfastalg}. We initialize at the sparse solution $\bZ_{\mathcal{N}_s,\cdot}^{(0)}=\bzero$ and thus for $n\in\mathcal{N}_s$
\begin{equation}
 \bz^{(1)}_{n,\cdot}=\left(1-\frac{\lambda(1-\alpha)t}{\left\| \left( \tilde\by_{n,\cdot}\right)_+\right\|_2} \right)_+ \left( \tilde\by_{n,\cdot}\right)_+,\nonumber
 \end{equation} 
where $\tilde\bY_{\mathcal{N}_s,\cdot}=t(\tilde\bA^f_{\mathcal{M}(\mathcal{N}_s),\mathcal{N}_s})^\top\bY_{\mathcal{M}(\mathcal{N}_s),\cdot}-t\lambda\alpha\bone\bone^\top$. We will have $\hat\bZ_{\mathcal{N}_s,\cdot}=\bzero$ if $\bz_{n,\cdot}^{(1)}=\bzero$ for all $n\in\mathcal{N}_s$. Note that $\bz^{(1)}_{n,\cdot}=\bzero$ if 
\begin{equation}
\label{eq:neuroncondition}
\lambda(1-\alpha)t  \geq \left\|  \left(t\left[(\tilde\bA^f_{\mathcal{M}(\mathcal{N}_s),\mathcal{N}_s})^\top\bY_{\mathcal{M}(\mathcal{N}_s),\cdot}\right]_{n,\cdot} - t\lambda\alpha\bone \right)_+ \right\|_2.
\end{equation}
By algebraic manipulation, the sparsity conditions given in \eqref{eq:neuronconditionsingle} and \eqref{eq:neuroncondition} can be shown to be equivalent to the condition given in Lemma~\ref{lem:maxlam}. Alternatively, this lemma's result also follows from inspection of the optimality condition for \eqref{eq:neuronsgl}. 
\end{proof}

\subsection{Proof of Corollary~\ref{cor:neuronmaxlam}}
\label{app:neuronmaxlam2}

\begin{proof}
The sufficient condition given in Corollary~\ref{cor:neuronmaxlam} follows from noting that \eqref{eq:neuroncondition} is satisfied if
$\lambda(1-\alpha)  \geq \left\|  \left(\left[(\tilde\bA^f)^\top\bY\right]_{k,\cdot} \right)_+ \right\|_2$
or if
$\lambda\alpha\geq\left(\left[(\tilde\bA^f)^\top\bY\right]_{k,l}\right)_+$
for $l=1,\ldots,T$. Thus, when at least one of these two conditions is satisfied for all $k=1,\ldots,{K_f}$, then the solution to \eqref{eq:neuronsgl} will be sparse.
\end{proof}

\subsection{Proof of Lemma \ref{lem:doubleROI}}
\label{app:doubleROI}
\begin{proof}
Let $\hat\bZ$ be the solution to \eqref{eq:neuronsgl}. In anticipation of contradiction, assume there exists $j\in\{1,\ldots,T\}$ such that $\hat\bz_{3,j}>0$. Define $\tilde\bZ$ as $\tilde\bz_{1,\cdot}=\hat\bz_{1,\cdot} + \left(\frac{\bone^\top\ba^f_{\cdot,1}}{\bone^\top(\ba^f_{\cdot,1}+\ba^f_{\cdot,2})} \right) \hat\bz_{3,\cdot}$, $\tilde\bz_{2,\cdot}=\hat\bz_{2,\cdot} + \left(\frac{\bone^\top\ba^f_{\cdot,2}}{\bone^\top(\ba^f_{\cdot,1}+\ba^f_{\cdot,2})} \right) \hat\bz_{3,\cdot}$, $\tilde\bz_{3,\cdot}=\bzero$, and $\tilde\bz_{k,\cdot}=\hat\bz_{k,\cdot}$ for $k=4,\ldots,K^f$. Let $\text{obj}(\bZ)$ be the value of the objective function of \eqref{eq:neuronsgl} at $\bZ$ for some fixed $\lambda$ and $\alpha$. We have
 \begin{align}
\text{obj}(\tilde\bZ)-\text{obj}(\hat\bZ) \nonumber&= \lambda(1-\alpha)\sum_{k=1}^3 \left(\| \tilde\bz_{k,\cdot}\|_2 - \| \hat\bz_{k,\cdot}\|_2\right)\\\nonumber
&=\lambda(1-\alpha)[ \|\hat\bz_{1,\cdot} + (\bone^\top\ba^f_{\cdot,1})/(\bone^\top(\ba^f_{\cdot,1}+\ba^f_{\cdot,2}))\hat\bz_{3,\cdot}\|_2 \nonumber\\ & \qquad + \|\hat\bz_{2,\cdot} + (\bone^\top\ba^f_{\cdot,2})/(\bone^\top(\ba^f_{\cdot,1}+\ba^f_{\cdot,2}))\hat\bz_{3,\cdot}\|_2   - \left( \|\hat\bz_{1,\cdot}\|_2 + \|\hat\bz_{2,\cdot}\|_2 + \|\hat\bz_{3,\cdot}\|_2\right)]\nonumber\\
&< \lambda(1-\alpha)[ \|\hat\bz_{1,\cdot} \|_2+ (\bone^\top\ba^f_{\cdot,1})/(\bone^\top(\ba^f_{\cdot,1}+\ba^f_{\cdot,2}))\|\hat\bz_{3,\cdot}\|_2 + \|\hat\bz_{2,\cdot}\|_2 \nonumber\\ & \qquad+ (\bone^\top\ba^f_{\cdot,2})/(\bone^\top(\ba^f_{\cdot,1}+\ba^f_{\cdot,2}))\|\hat\bz_{3,\cdot}\|_2   - \left( \|\hat\bz_{1,\cdot}\|_2 + \|\hat\bz_{2,\cdot}\|_2 + \|\hat\bz_{3,\cdot}\|_2\right)]\nonumber\\
&=0.  \nonumber
\end{align}
This is a contradiction, so we conclude $\hat\bz_{3,\cdot}=\bzero$.
\end{proof}


\bibliography{reference}
\bibliographystyle{plainnat}

\end{document}